\documentclass[a4paper, amsfonts, amssymb, amsmath, reprint, showkeys, nofootinbib, twoside,notitlepage,onecolumn]{revtex4-1}

\usepackage{amsmath,amstext}
\usepackage[T1]{fontenc}
\usepackage{amssymb}
\usepackage{graphicx}
\usepackage{ae,aecompl}

\DeclareFontFamily{OT1}{pzc}{}
\DeclareFontShape{OT1}{pzc}{m}{it}{<-> s * [1.10] pzcmi7t}{}
\DeclareMathAlphabet{\mathpzc}{OT1}{pzc}{m}{it}

\usepackage{hyperref}
\usepackage{amsmath}
\usepackage{amssymb}
\usepackage{mathtools}
\usepackage{bm}
\usepackage{cleveref}
\usepackage{tensor}
\usepackage{braket}
\usepackage{enumitem}
\usepackage{mhchem}
\usepackage{amsthm}
\usepackage{nccmath}
\usepackage{mathrsfs}
\usepackage{color}

\newcommand{\spc}{\quad \quad \quad}

\newcommand{\E}{{\mathbb{E}}}
\newcommand{\G}{{\mathbb{G}}}
\newcommand{\PC}{{\mathbb{P}_C}}
\newcommand{\PN}{{\mathbb{P}_N}}
\newcommand{\s}{{\sigma \!\! \! \sigma}}

\def\be{\begin{equation}}
\def\ee{\end{equation}}
\def\beq{\begin{eqnarray}}
\def\eeq{\end{eqnarray}}

\theoremstyle{definition}

\theoremstyle{theorem}
\newtheorem{theorem}{Theorem}

\theoremstyle{corollary}

\begin{document}
\title{The causal diamond of relativistic hydrodynamics}
\author{L.~Gavassino}
\affiliation{Department of Applied Mathematics and Theoretical Physics, University of Cambridge, Wilberforce Road, Cambridge CB3 0WA, United Kingdom}

\begin{abstract}
The convergence of hydrodynamics is commonly investigated by Taylor-expanding the gapless dispersion relations $\omega(k)$ around $k=0$. We show that this diagnostic can fail arbitrarily close to the hydrodynamic limit: sound modes can collide at arbitrarily small wave number, rendering $\omega(k)$ nonanalytic, while the relationship between densities and fluxes remains analytic there. We instead formulate the gradient expansion off-shell (i.e. in the presence of external forces), directly at the level of the constitutive relations. For causal theories with relaxational spectra, we establish convergence for $|i\omega|+w|ik|<1/\tau_g$, where $w$ is the maximal information speed and $1/\tau_g$ the nonhydrodynamic gap. On the naturally Lorentzian $\{i\omega,ik\}$ plane, this region is a causal diamond, whose boundary marks the first possible encounter with nonhydrodynamic modes. Beyond it, analytic continuation reveals the nonhydrodynamic spectrum through singularities of the resummed constitutive relations. Thus, hydrodynamics is aware of the degrees of freedom beyond hydrodynamics, but this information becomes fully accessible only off-shell.
\end{abstract} 
\maketitle

\noindent \textbf{\textit{Introduction ---}} The fact that hydrodynamics can be organized as a gradient expansion is by now common knowledge \cite{Baier2008,RomatschkeReview:2017ejr,FlorkowskiReview2018,Kovtun2019,BemficaDNDefinitivo2020,GavassinoInfiniteOrded2024pgl}. Within linear-response theory, this means that, if $\rho_I$ are perturbations of conserved densities and $J_I$ the associated fluxes, then near the origin in Fourier space one can write a formal Taylor series
\begin{equation}\label{JIgeneral}
J_I =\sum_{H,n,m} \chi_{IHnm} (i\omega)^n (ik)^m \rho_H \spc
(\text{for }|\omega|,|k|\text{ sufficiently small}) .
\end{equation}
These constitutive relations admit a more familiar representation. Separating the zeroth-order contribution, and noting that diffusive transport vanishes in the absence of spatial gradients, we can rewrite \eqref{JIgeneral} as
\begin{equation}\label{decomposco}
J_I =
\underbrace{\sum_H\chi_{IH}\rho_H}_{\text{Local equilibrium}}
-\underbrace{ik\sum_H\mathfrak{D}_{IH}(i\omega,ik)\rho_H}_{\text{Diffusive part}} \, .
\end{equation}
The functions $\mathfrak{D}_{IH}(i\omega,ik)$ are holomorphic near the origin and may be interpreted as \textit{resummed diffusivities}. 

Such frequency- and wave-number-dependent transport coefficients are familiar from generalized hydrodynamics, rheology, and nonlocal electronic transport \cite{ChungYip1969,Hess1983,TGaskell_1987,VelascoColin,Hansen2007,AlfordBulk,Bu:2014sia,StephanovHydroPlus:2017ghc,Kiselev,BulkGavassino,HellerSingulant:2021yjh,Krihin2023,Nazaryan2024}.
Yet, little is known about how relativity, and causality in particular, constrains the analytic structure of $\mathfrak{D}_{IH}(i\omega,ik)$. Instead, much of the relativistic linear-response literature has focused on Taylor expansions of individual hydrodynamic (i.e., gapless) dispersion relations, $\omega(k)=\sum_n a_n k^n$ \cite{Grozdanov:2019kge,GrozdanovComplexlife:2019uhi,HellerLinearresponse:2020uuy,HellerConvergencemodes:2020hnq,Liu:2021qmt,HellerBounds2022ejw,HellerHydrohedron2023jtd}. These provide a useful proxy for the gradient expansion, since the dispersion relations follow from solving $-i\omega\rho_I+ikJ_I=0$, but have two important limitations. First, they contain no information about $J_I$ at values of $\omega$ and $k$ that do not satisfy the source-free conservation laws but remain accessible through external forcing (i.e. in the ``off-shell regime'' \cite{Haehl:2015pja,Bu:2015ame,Bu:2018drd,Diles:2023tau}). Second, and more importantly, we will show that the Taylor series of $\omega(k)$ can break down arbitrarily close to the origin even when $\mathfrak{D}_{IH}(i\omega,ik)$ remains holomorphic over a much larger region of $(\omega,k)$-space.

Here we study the effect of relativity on $\mathfrak{D}_{IH}(i\omega,ik)$ within kinetic theory, transient hydrodynamics, and rheology, and more generally in the class of theories discussed in \cite{GavassinoDisturbing:2026klp,GavassinoLorentzian:2026seq}. We prove that causality guarantees convergence of the resummed diffusivities within a universal domain set by the maximal information speed and the nonhydrodynamic gap. On the relaxation plane, this domain is a causal diamond, corresponding to the region where non-hydrodynamic spectral information cannot enter. We further derive universal bounds on all the coefficients of the gradient expansion \eqref{JIgeneral}, and illustrate the results with explicit examples.

Throughout the article, we adopt the metric signature $(-,+,+,+)$, and work in natural units, $c=\hbar=k_B=1$.

\noindent \textbf{\textit{Gradient expansion vs dispersion relations ---}} We begin by illustrating why the convergence of the gradient expansion cannot, in general, be inferred from that of the hydrodynamic dispersion relations. Consider Israel-Stewart theory of bulk viscosity at zero chemical potential. In the sound channel, the relevant conserved densities are the energy and longitudinal-momentum densities, $\rho_I{=}(\delta\varepsilon,\delta\mathcal{F})$, with associated fluxes $J_I=(\delta\mathcal{F},c_s^2\delta\varepsilon+\delta\Pi)$. Here, $c_s$ is the speed of sound and $\delta\Pi$ is the bulk pressure, which obeys the Israel-Stewart relaxation equation $\tau_g\partial_t\delta\Pi+\delta\Pi=-\mathfrak{D}\partial_x\delta u$. Working in Fourier space, we immediately see that $\mathfrak{D}_{\varepsilon\varepsilon}=\mathfrak{D}_{\varepsilon\mathcal{F}}=\mathfrak{D}_{\mathcal{F}\varepsilon}=0$, while
\begin{equation}\label{constituoIS}
\mathfrak{D}_{\mathcal{F}\mathcal{F}}(i\omega,ik)=\frac{\mathfrak{D}}{1-i\omega\tau_g}
\equiv \mathfrak{D}\sum_n(i\omega\tau_g)^n
\spc (\text{for }|\omega\tau_g|<1) .
\end{equation}

\noindent Let us compare the convergence domain of this geometric series with that of the Taylor series of the hydrodynamic modes. Using the conservation laws $-i\omega\rho_I+ikJ_I=0$, and introducing the characteristic speed $w^2=c_s^2+\mathfrak{D}/\tau_g$, we obtain $(i\omega\tau_g)^3-(i\omega\tau_g)^2+(wk\tau_g)^2i\omega\tau_g-(c_sk\tau_g)^2=0$. Its roots consist of two sound modes and one nonhydrodynamic mode. For sufficiently small $k$, the discriminant $\Delta=-4c_s^2(k\tau_g)^2+(-27c_s^4{+}18c_s^2w^2{+}w^4)(k\tau_g)^4-4w^6(k\tau_g)^6$ is negative, corresponding to two propagating sound modes. However, for $c_s\,{\ll}\, w$, it changes sign at $k_c\tau_g\,{\simeq}\,2c_s/w^2$, where the two sound poles collide and become nonpropagating (see Fig.~\ref{fig:IS}). Hence, by taking $c_s\to0$, the collision can be brought arbitrarily close to the origin (see also \cite{AbbasiKaminski}).

Crucially, nothing singular happens to the constitutive relation \eqref{constituoIS} at $k=k_c$: $\mathfrak{D}_{\mathcal{F}\mathcal{F}}$ remains perfectly analytic there, with a domain of convergence $|\omega\tau_g|<1$ independent of $c_s$ and $k$. The nonanalyticity arises only upon solving the conservation laws for the individual branches $\omega(k)$, thereby introducing algebraic branch points that are absent from the constitutive relations themselves. Thus, the radius of convergence of $\omega(k)$ can become arbitrarily small without any corresponding breakdown of \eqref{JIgeneral}. In the Supplementary Material, we show that such sound-sound collisions at $k_c=\mathcal{O}(c_s)$ can persist in richer kinetic models, where $\mathfrak{D}_{\mathcal{F}\mathcal{F}}$ has infinitely many spatial-gradient corrections.

\begin{figure}[t]
    \centering
\includegraphics[width=0.38\linewidth]{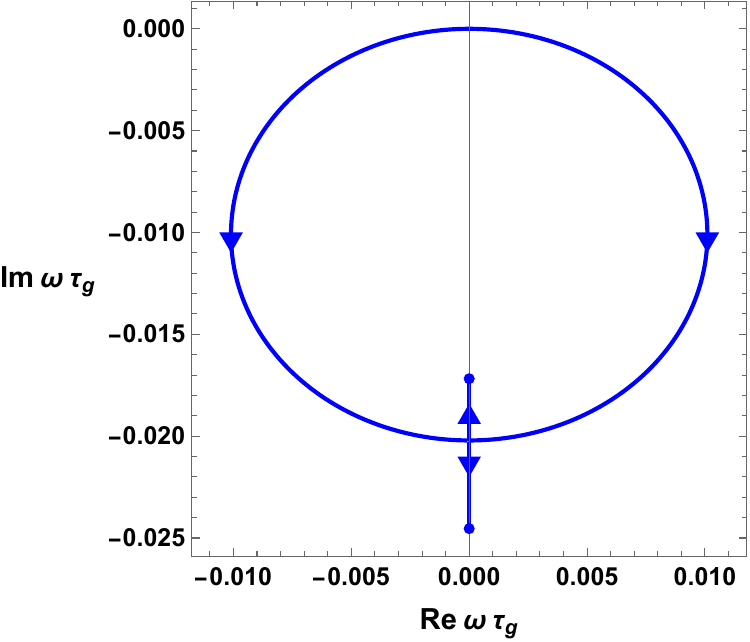}\hspace{0.08\linewidth}
\caption{Sound modes of Israel-Stewart theory, with $w=1$ and $c_s=0.1$. As $k>0$ increases, the two sound modes trace an approximately circular orbit and collide at $k\tau_g\simeq0.2$ and $|\omega\tau_g|\simeq0.02$. The collision thus occurs well within the domain $|\omega\tau_g|<1$, where the gradient expansion of the constitutive relation \eqref{constituoIS} converges.}
    \label{fig:IS}
\end{figure}

Having seen that the analytic structure of individual dispersion relations can obscure the true convergence properties of \eqref{JIgeneral}, we now turn directly to the constitutive relations and ask what can be learned about their gradient expansion from the underlying microscopic dynamics.

\noindent \textbf{\textit{Assumptions on the microscopic theory ---}} It has been shown \cite{GavassinoSymmetric2022nff,GavassinoConvergence:2024xwf,GavassinoDisturbing:2026klp} that kinetic theory, transient hydrodynamics, and causal viscoelasticity can all be recast in the universal form
\begin{equation}\label{boltzmann}
\partial_t\Psi=-(\s+\E\partial_x)\Psi+Q \, ,
\end{equation}
where $\Psi:\text{``Spacetime''}{\to}\mathcal{H}$ is a linearized degree of freedom taking values in a Hilbert space $\mathcal{H}$, and $Q:\text{``Spacetime''}{\to}\mathcal{H}$ is an external source. The inner product $(\cdot,\cdot)$ on $\mathcal{H}$ is Onsager-normalized, so that the quadratic free-energy perturbation is $\Delta\mathfrak{F}=\frac{1}{2}(\Psi,\Psi)$. Assuming that $\Psi$ is even under PT symmetry, Onsager reciprocity implies that $\s$ and $\E$ are self-adjoint. Furthermore, causality implies that $\E$ is bounded, with norm $w=||\E||\leq1$ defining the maximal speed of information. Finally, we assume that $0$ is an eigenvalue of $\s$ with finite multiplicity and that the nonhydrodynamic spectrum is gapped, $\text{Spectrum}(\s)\subseteq\{0\}\cup[1/\tau_g,+\infty)$, where $1/\tau_g$ is the nonhydrodynamic gap  \cite{Dudynski1989Hydro,GavassinoGapless:2024rck,GavassinoMemoryEffects:2026ebx}.

\noindent \textbf{\textit{Gradient expansion from microphysics ---}} We consider the orthogonal decomposition $\mathcal{H}=\mathcal{H}_C \oplus \mathcal{H}_N$, where $\mathcal{H}_C=\ker(\s)$ and $\mathcal{H}_N=[\ker(\s)]^\perp$, with associated orthogonal projectors $\PC$ and $\PN$. This decomposition separates the degrees of freedom into their ``conserved'' and ``nonconserved'' parts. In particular, let $e_I$ be an orthonormal basis of $\ker(\s)$. Taking the inner product of \eqref{boltzmann} with $e_I$, we immediately obtain the conservation laws $\partial_t \rho_I+\partial_x J_I=(e_I,Q)$, with $\rho_I=(e_I,\Psi)$ and $J_I=(e_I,\E\Psi)$. Thus, writing $\Psi=\PC\Psi+\PN\Psi=\sum_H\rho_H e_H+\Psi_N$, the term $\sum_H\rho_H e_H\in\mathcal{H}_C$ encodes the density content of the state, while $\Psi_N\in\mathcal{H}_N$ is its nonequilibrium part. Indeed, any constant state of the form $\Psi=\sum_I c_I e_I$ is always a solution of \eqref{boltzmann} with $Q=0$, corresponding to the global equilibrium state obtained by shifting the conserved densities by $c_I$.

Now, let us assume that $Q$ takes the form of an exponential plane wave, $Q\propto e^{ikx-i\omega t}$, where $\omega$ and $k$ are arbitrary and may be complex. We can then seek solutions of \eqref{boltzmann} with $\Psi\propto e^{ikx-i\omega t}$. Applying $\PN$ to both sides, we obtain
\begin{equation}\label{PsiN}
(\s_N+ik\E_N-i\omega)\Psi_N=-ik\sum_H\PN\E e_H\rho_H +\PN Q \, ,
\end{equation}
where we have introduced the compressions $\s_N=\PN\s\PN$ and $\E_N=\PN\E\PN$, both viewed as operators on $\mathcal{H}_N$. Suppose that $i\omega$ belongs to the resolvent set of $\s_N+ik\E_N$. Then, the operator $\s_N+ik\E_N-i\omega$ is invertible with bounded inverse, allowing us to solve \eqref{PsiN} for $\Psi_N$. Substituting the result into $J_I=(e_I,\E\Psi)$, and reabsorbing the term $\propto \PN Q$ into the source term on the right-hand side of the conservation law, we recover \eqref{decomposco}, with
\begin{equation}\label{chianddtheboss}
\chi_{IH}=(e_I,\E e_H) \, , \qquad \qquad \mathfrak{D}_{IH}(i\omega,ik)= (\PN\E e_I,[\s_N+ik\E_N-i\omega]^{-1}\PN\E e_H) \, .
\end{equation}
We can now state our main result.

\begin{theorem}\label{theo1}
Let $\mathfrak{D}_{IH}(i\omega,ik)$ be defined as above. These functions are holomorphic for complex $\omega$ and $k$ throughout the region $|i\omega|+|wik|<1/\tau_g$, where they admit an absolutely convergent expansion $\mathfrak{D}_{IH}=\sum_{n,m}\mathfrak{D}_{IHnm}(i\omega)^n(ik)^m$. Furthermore, within this region, the following bounds hold:
\vspace{-0.1cm}
\begin{equation}\label{bounds}
\begin{split}
|\mathfrak{D}_{IH}(i\omega,ik)| &\leq \dfrac{w^2\tau_g}{1-|i\omega\tau_g|-|wik\tau_g|} \qquad \qquad (\text{for }\,\,|i\omega|+|wik|<1/\tau_g) \, , \\
|\mathfrak{D}_{IHnm}| &\leq \binom{n{+}m}{m}w^{m+2}\tau_g^{n+m+1} \, .
\end{split}
\end{equation}
\end{theorem}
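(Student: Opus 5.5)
The plan is to expand the resolvent in \eqref{chianddtheboss} as a Neumann series around $\s_N$, treating $i\omega-ik\E_N$ as a perturbation. On $\mathcal{H}_N$, $\s_N$ is self-adjoint with spectrum in $[1/\tau_g,+\infty)$, hence invertible with $\|\s_N^{-1}\|\le\tau_g$ and $\s_N^{-1}\ge0$. Also $\|\E_N\|\le\|\E\|=w$, and $\|\PN\E e_I\|\le w$ because $e_I$ is a unit vector. Write
\begin{equation*}
\s_N+ik\E_N-i\omega=\s_N^{1/2}\big[1-\s_N^{-1/2}(i\omega-ik\E_N)\s_N^{-1/2}\big]\s_N^{1/2}\equiv \s_N^{1/2}(1-T)\s_N^{1/2}.
\end{equation*}
The operator $T=i\omega\s_N^{-1}-ik\,\s_N^{-1/2}\E_N\s_N^{-1/2}$ satisfies $\|T\|\le\tau_g(|i\omega|+w|ik|)<1$ in the stated region. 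Therefore $(1-T)^{-1}=\sum_j T^j$ converges in norm, and its norm is at most $[1-|i\omega\tau_g|-|wik\tau_g|]^{-1}$.

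This gives $\mathfrak{D}_{IH}=(\s_N^{-1/2}\PN\E e_I,(1-T)^{-1}\s_N^{-1/2}\PN\E e_H)$. For the first bound, I use $\|\s_N^{-1/2}\PN\E e_I\|\le\sqrt{\tau_g}\,w$ and Cauchy--Schwarz, which yields $|\mathfrak{D}_{IH}|\le w^2\tau_g/(1-|i\omega\tau_g|-|wik\tau_g|)$. Holomorphy in $(\omega,k)$ follows because each term of $\sum_jT^j$ is a polynomial in $(i\omega,ik)$ with operator coefficients, and the series converges uniformly on compact subsets of the region. I would take care to use the symmetric splitting through $\s_N^{1/2}$ rather than the naive factorization $\s_N(1-\s_N^{-1}(\cdots))$. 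Both give the same norm estimate here, but the symmetric one makes the Cauchy--Schwarz step clean and handles unbounded $\s$: all the operators involved ($\s_N^{-1/2}$, $\E_N$) are bounded, and the factorization holds on the domain of $\s_N$.

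For the coefficients, I set $A=\s_N^{-1}$ and $B=-\s_N^{-1/2}\E_N\s_N^{-1/2}$, so that $T=(i\omega)A+(ik)B$ with $\|A\|\le\tau_g$ and $\|B\|\le w\tau_g$. Expanding $T^{n+m}$ gives $\binom{n+m}{m}$ words with $n$ factors of $A$ and $m$ factors of $B$. Each word has norm at most $\tau_g^{n}(w\tau_g)^m$. Collecting terms, the coefficient $\mathfrak{D}_{IHnm}$ is the sum over these words $W$ of $(\s_N^{-1/2}\PN\E e_I,W\s_N^{-1/2}\PN\E e_H)$. Hence $|\mathfrak{D}_{IHnm}|\le\binom{n+m}{m}w^m\tau_g^{n+m}\cdot w^2\tau_g$, which is the claimed bound.

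Absolute convergence follows because $\sum_{n,m}\binom{n+m}{m}w^m\tau_g^{n+m}|i\omega|^n|ik|^m=\sum_j(\tau_g|i\omega|+w\tau_g|ik|)^j<\infty$. The same estimate justifies the rearrangement from $\sum_j T^j$ into the double series. The main obstacle is mostly bookkeeping, namely checking that the resolvent defined in the text coincides with the Neumann series on the whole region. This holds because $1-T$ is invertible there, so $i\omega$ lies in the resolvent set of $\s_N+ik\E_N$ and the two inverses agree.
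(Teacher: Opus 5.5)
Your proof is correct and follows the paper's argument: a Neumann series for the resolvent around $\s_N$ with perturbation $i\omega-ik\E_N$, the same norm bounds ($\|\s_N^{-1}\|\le\tau_g$, $\|\E_N\|\le w$, $\|\PN\E e_I\|\le w$), and a binomial count of the $\binom{n+m}{m}$ operator words to bound $\mathfrak{D}_{IHnm}$. You use the symmetric splitting $\s_N^{1/2}(1-T)\s_N^{1/2}$, whereas the paper uses the one-sided factorization $[1-\s_N^{-1}(i\omega-ik\E_N)]^{-1}\s_N^{-1}$; the paper's version yields the same estimates just as directly and is equally valid for unbounded $\s$, so this difference is cosmetic.
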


\begin{proof}
Since $\text{Spectrum}(\s_N)\subseteq[1/\tau_g,\infty)$, the inverse $\s_N^{-1}$ exists, and $||\s_N^{-1}||\leq\tau_g$. Thus, we can write $[\s_N{+}ik\E_N{-}i\omega]^{-1}=[1-\s_N^{-1}(i\omega{-}ik\E_N)]^{-1}\s_N^{-1}$. Using the properties of the operator norm, we find $||\s_N^{-1}(i\omega{-}ik\E_N)||\leq\tau_g(|i\omega|{+}w|ik|)$, so the Neumann series of $[1-\s_N^{-1}(i\omega{-}ik\E_N)]^{-1}$ converges absolutely for $|i\omega|{+}|wik|<1/\tau_g$, and we can write
\vspace{-0.1cm}
\begin{equation}\label{DIHiwik}
\mathfrak{D}_{IH}(i\omega,ik)=\sum_{a=0}^\infty(\PN\E e_I,[\s_N^{-1}(i\omega-ik\E_N)]^a\s_N^{-1}\PN\E e_H) \, .
\end{equation}
From this, we immediately obtain $|\mathfrak{D}_{IH}|\leq\tau_gw^2\sum_a\tau_g^a(|i\omega|+|wik|)^a$, which, once resummed, yields the first line of \eqref{bounds}. Expanding the powers of $[\s_N^{-1}(i\omega-ik\E_N)]^a$ and collecting together all terms proportional to $(i\omega)^n(ik)^m$, we obtain the series $\sum_{n,m}\mathfrak{D}_{IHnm}(i\omega)^n(ik)^m$. Each $\mathfrak{D}_{IHnm}$ is a sum of matrix elements $(\PN\E e_I,\text{``String of operators''}\,\s_N^{-1}\PN\E e_H)$, where the string of operators contains $n+m$ factors of $\s_N^{-1}$ and $m$ factors of $\E_N$. Each matrix element is therefore bounded in magnitude by $w^{m+2}\tau_g^{n+m+1}$, while the binomial coefficient in \eqref{bounds} counts the number of such terms.
\end{proof}

\noindent \textbf{\textit{The causal diamond ---}} The inequality $|i\omega|+|wik|<1/\tau_g$ is difficult to interpret if we restrict $\omega$ and $k$ to be real. Indeed, it can be shown (see Supplementary Material) that $\mathfrak{D}_{IH}(i\omega,ik)$ remains analytic everywhere on the Fourier plane. The physics becomes transparent instead on the relaxation plane, where $i\omega$ and $ik$ are real. As shown in \cite{GavassinoLorentzian:2026seq}, this plane inherits a geometry mirroring that of spacetime: just as signals propagate along worldlines timelike with respect to the acoustic metric $ds^2=-w^2dt^2+dx^2$, dispersion relations trace lines spacelike with respect to the dual acoustic metric $d\Tilde{s}^2=-d(i\omega)^2+w^2d(ik)^2$. In this geometry, $|i\omega|+|wik|<1/\tau_g$ is the causal diamond generated by the vertical segment $(i\omega,ik)\in[-1/\tau_g,1/\tau_g]\times\{0\}$. Crucially, singularities of $\mathfrak{D}_{IH}$ do appear once we leave this diamond. Indeed, for real $ik$, the compressed operator $\s_N+ik\E_N$ is self-adjoint, so its spectrum is real, and the resummed diffusivities generically become singular precisely when $i\omega\in \text{Spectrum}(\s_N+ik\E_N)$. At $ik=0$, this spectrum starts above the generating segment and, by causality, cannot propagate inside the diamond. It can, however, graze its upper boundary. In Israel-Stewart theory this happens only at the tip, $i\omega=1/\tau_g$, whereas in kinetic theory ballistic excitations can trace the upper horizon, as we will see explicitly in the examples. This also explains the divergence of the bound in \eqref{bounds}: on the relaxation plane, $||[\s_N+ik\E_N-i\omega]^{-1}||$ is the inverse distance of $i\omega$ from the spectrum, and therefore diverges on the upper horizon whenever the nonhydrodynamic spectrum grazes it.

\noindent \textbf{\textit{Hydrodynamics knows about the nonhydrodynamic modes ---}} For Bjorken flow, it has been shown that the all-order gradient expansion encodes the nonhydrodynamic spectrum
\cite{Heller:2013fn,HellerSplainskiResurgence2015,AnicetoSpalinski,HellerKurkela,Aniceto:2018uik}.
Here, Theorem~\ref{theo1} provides the linear-response counterpart of this result.

Suppose that we know all coefficients $\mathfrak{D}_{IHnm}$. Then, we can reconstruct the resummed diffusivity $\mathfrak{D}_{IH}(i\omega,ik)$ within the region $|i\omega|+|wik|<1/\tau_g$. At fixed $k$, we can then analytically continue $\mathfrak{D}_{IH}(i\omega,ik)$ in $\omega$ beyond the convergence domain of the series, until a singularity is encountered. We know that such singularities can only occur for $i\omega\in\text{Spectrum}(\s_N{+}ik\E_N)$. At $k=0$, we have $\text{Spectrum}(\s_N)=\text{Spectrum}(\s)\setminus\{0\}$, so these singularities lie precisely on the nonhydrodynamic spectrum (more precisely, on the part of it that couples to the constitutive relations).

For $k\neq0$, the situation is more subtle, because $\text{Spectrum}(\s_N+ik\E_N)$ need not be a subset of $\text{Spectrum}(\s+ik\E)$. We already encountered this phenomenon in the Israel-Stewart example (equation \eqref{constituoIS}): the resummed diffusivity diverges at $\omega=-i/\tau_g$ for every $k$, even though at finite $k$ the true nonhydrodynamic mode lies elsewhere. However, the compressed and full operators have the same \textit{essential} spectrum. Indeed, since $\PC$ has finite rank, the difference induced by the compression amounts to the finite-rank operator $\mathbb{K}\propto\PC\E\PC-\PC\E-\E\PC$, which cannot modify the essential spectrum. Thus, although isolated nonhydrodynamic modes need not be reconstructed correctly at finite $k$, continuous nonhydrodynamic spectra are encoded in the analytic continuation of $\mathfrak{D}_{IH}$, where they generically appear as branch cuts or other nonanalyticities in the complex $\omega$-plane.

\newpage
\noindent \textbf{\textit{Example 1: Massless RTA ---}} The resummed diffusivities can be evaluated analytically in the case of a massless gas of non-conserved bosons in the Relaxation-Time Approximation (RTA), with relaxation time $\tau_g$. The details of the calculation are in the Supplementary Material, while here we only report the result.

Along shear waves with gradients in direction $x$ and velocity fluctuations in direction $y$, the relevant balance law is $\partial_t \delta T^{0y}+\partial_x \delta T^{xy}=\text{``Source''}$. Given that, in local equilibrium, the shear stress $\delta T^{xy}$ vanishes, the gradient expansion \eqref{decomposco} reduces to $\delta T^{xy}=-ik\mathfrak{D}_{yy} \delta T^{0y}$ (we write $\mathfrak{D}_{yy}$ as a shorthand for $\mathfrak{D}_{T^{0y}T^{0y}}$), with resummed momentum diffusivity
\begin{equation}
\mathfrak{D}_{yy}(i\omega,ik)=  \dfrac{1-i\omega\tau_g}{(ik\tau_g)^2} \left[1-\frac{2/3 }{ \frac{(1-i\omega\tau_g)^2}{(ik\tau_g)^2}+\left(1-\frac{(1-i\omega\tau_g)^2}{(ik\tau_g)^2}\right) \frac{1-i\omega\tau_g}{ik\tau_g} \text{arctanh}\left(\frac{ik\tau_g}{1-i\omega\tau_g}\right)} \right] \tau_g \, .  
\end{equation}
For sound waves traveling in direction $x$, the relevant balance laws are $\partial_t \delta T^{00}+\partial_x \delta T^{x0}=\text{``Source''}$ and $\partial_t \delta T^{0x}+\partial_x \delta T^{xx}=\text{``Source''}$. The fluxes are $\delta T^{x0}=\delta T^{0x}$ and $\delta T^{xx}=\frac{1}{3}\delta T^{00}-ik\mathfrak{D}_{xx}\delta T^{0x}$, with resummed diffusivity 
\begin{equation}
\mathfrak{D}_{xx}(i\omega,ik)=\left[ \dfrac{(1-i\omega\tau_g)^2}{(ik\tau_g)^2}+\dfrac{ \frac{1}{3}\text{arctanh}(\frac{ik\tau_g}{1-i\omega\tau_g})}{ \frac{ik\tau_g}{1-i\omega\tau_g}-\text{arctanh}(\frac{ik\tau_g}{1-i\omega\tau_g})}\right] \dfrac{\tau_g}{1-i\omega\tau_g}\, .
\end{equation}

Setting $ik=0$, we obtain $\mathfrak{D}_{yy}=\frac{\tau_g}{5}(1-i\omega\tau_g)^{-1}$ and $\mathfrak{D}_{xx}=\frac{4\tau_g}{15}(1-i\omega\tau_g)^{-1}$, both of which have a pole at the location of non-hydrodynamic mode $i\omega=1/\tau_g$. If we further set $i\omega=0$, and compare with the hydrodynamic prediction, we recover the well-known formula for the shear viscosity of RTA: $\eta=\frac{1}{5}(\varepsilon+P)\tau_g$. Moreover, plotting the resummed diffusivities on the causal diamond (see Fig. \ref{fig:DiamondRTA}), one finds that they are both singular on the upper horizon, $i\omega=1/\tau_g -|ik|$ (note that $w=1$), due to the encounter with the ballistic spectrum, and cannot be continued further. Finally, plotting them on the complex $\omega-$plane for real $k$, we observe the usual RTA ballistic branch cut.

\begin{figure}[h!]
    \centering
\includegraphics[width=0.43\linewidth]{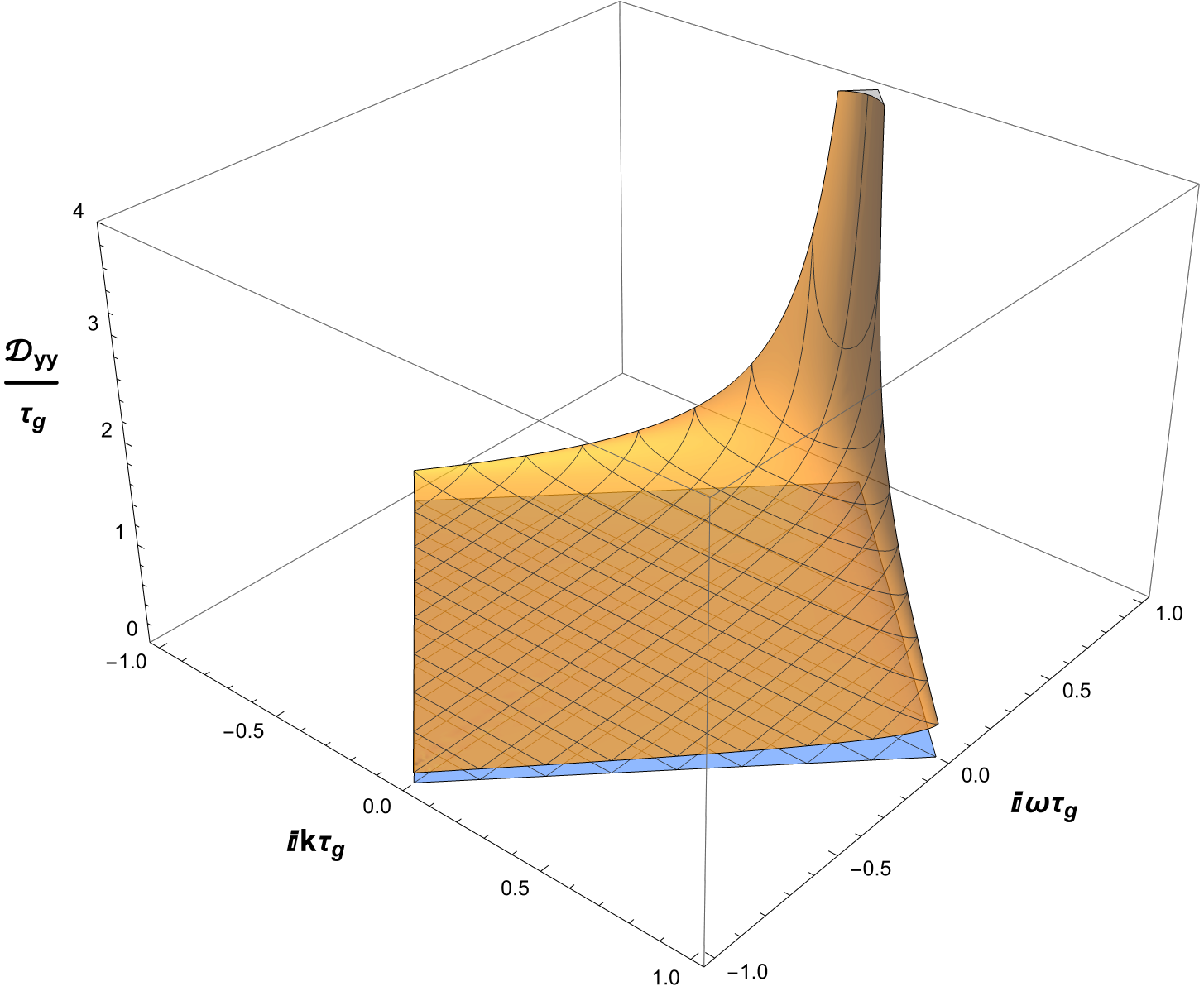}\hspace{0.02\linewidth}\includegraphics[width=0.43\linewidth]{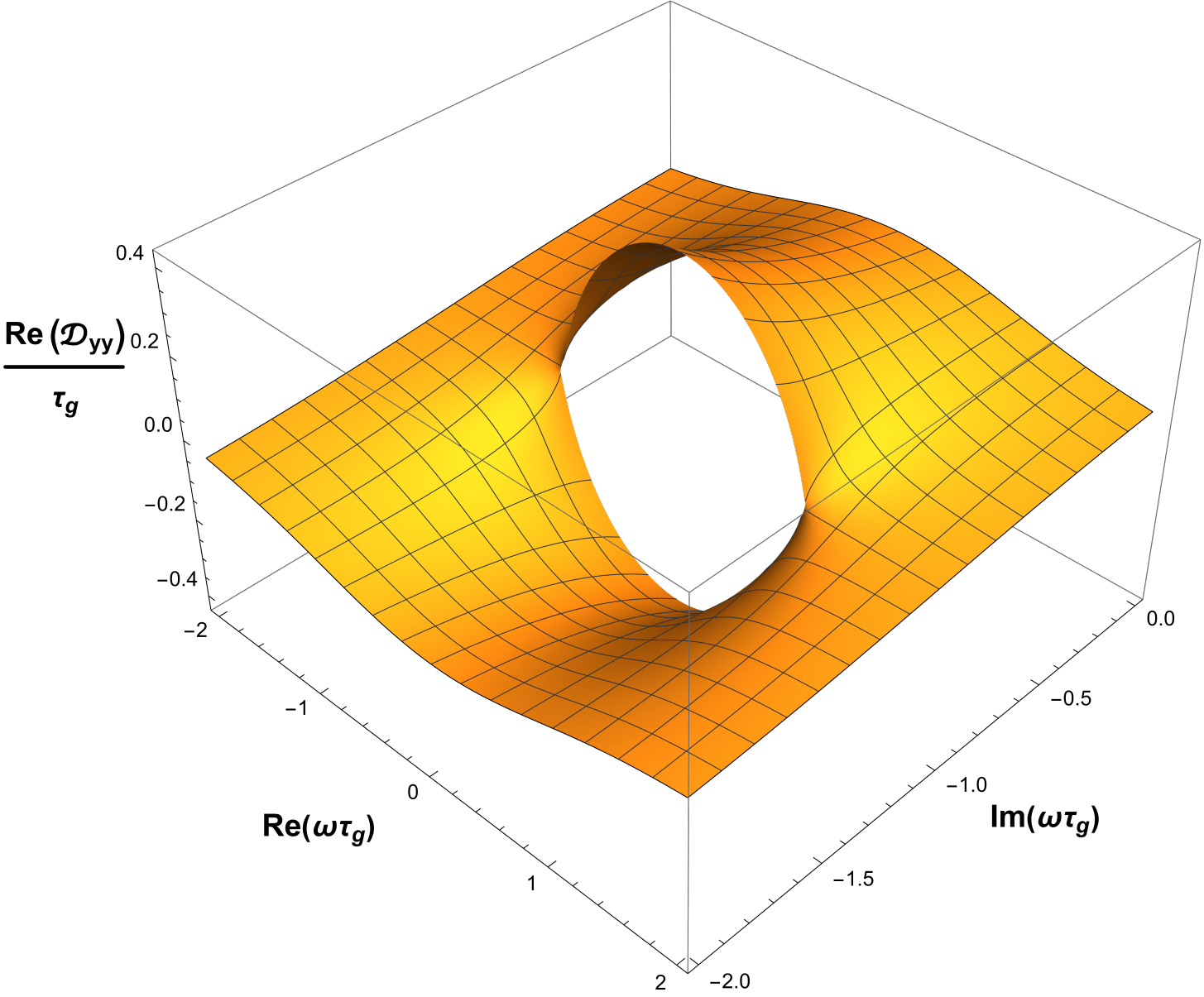}
\caption{Resummed diffusivity $\mathfrak{D}_{yy}$ of massless RTA kinetic theory on the causal diamond (left) and on the complex $\omega$-plane at fixed $k=1/\tau_g$ (right). \textit{Left panel}: On the relaxation plane, $\mathfrak{D}_{yy}$ remains real and non-negative throughout the diamond. This positivity is general: for any vector $V^I$, the quadratic form $V^{I*}\mathfrak{D}_{IH}V^H$ is a diagonal matrix element of the operator $[\s_N+ik\E_N-i\omega]^{-1}$, which is positive definite inside the diamond. Remarkably, the resummed diffusivity remains finite upon approaching the upper horizon (except at its tip), where $\mathfrak{D}_{yy}=1/|3ik|$, while its normal derivative diverges. Thus, the boundary of the convergence domain need not correspond to a divergence of the resummed diffusivity itself, but rather to a loss of regularity that prevents further analytic continuation. \textit{Right panel}: At fixed $k$, $\mathfrak{D}_{yy}$ exhibits a branch cut precisely along the nonhydrodynamic spectrum, $\omega=-i/\tau_g+[-k,k]$ \cite{RomatschkeCutsandPoles:2015gic}. The qualitative behavior of $\mathfrak{D}_{xx}$ is fully analogous.}
\label{fig:DiamondRTA}
\end{figure}

\noindent \textbf{\textit{Example 2: Fokker-Planck diffusion ---}} The resummed diffusivity can also be evaluated analytically for a massless gas of conserved particles traveling in one spatial dimension and interacting with an external medium in the Fokker-Planck approximation (infinitely frequent but infinitely soft scatterings) \cite{Debbasch,DunkelHanggi}. Denoting the non-resummed diffusivity by $\mathfrak{D}$, we find (see Supplementary Material) that the conservation law $\partial_t \delta n+\partial_x \delta J=\text{``Source''}$ is supplemented by the resummed constitutive relation $\delta J=-ik\mathfrak{D}_{nn}\delta n$, with
\begin{equation}\label{DnnFokkerPLanckwow}
\mathfrak{D}_{nn}(i\omega,ik)= \frac{ik \left(q_+-q_-\right)-i\omega \left(q_++q_--2\right)}{(i\omega)^2 \left(q_++q_-+2\right)-ik i\omega \left(q_+-q_-\right)-4 (ik)^2} \, , \qquad \qquad \text{with }q_\pm=\sqrt{1{-}4 \mathfrak{D} (i\omega{\pm}ik)} \, ,
\end{equation}
as is shown in figure \ref{fig:DiamondFokker}. Again, the singularities of the resummed diffusivity are due to ballistic branch cuts.

\begin{figure}[h!]
    \centering
\includegraphics[width=0.43\linewidth]{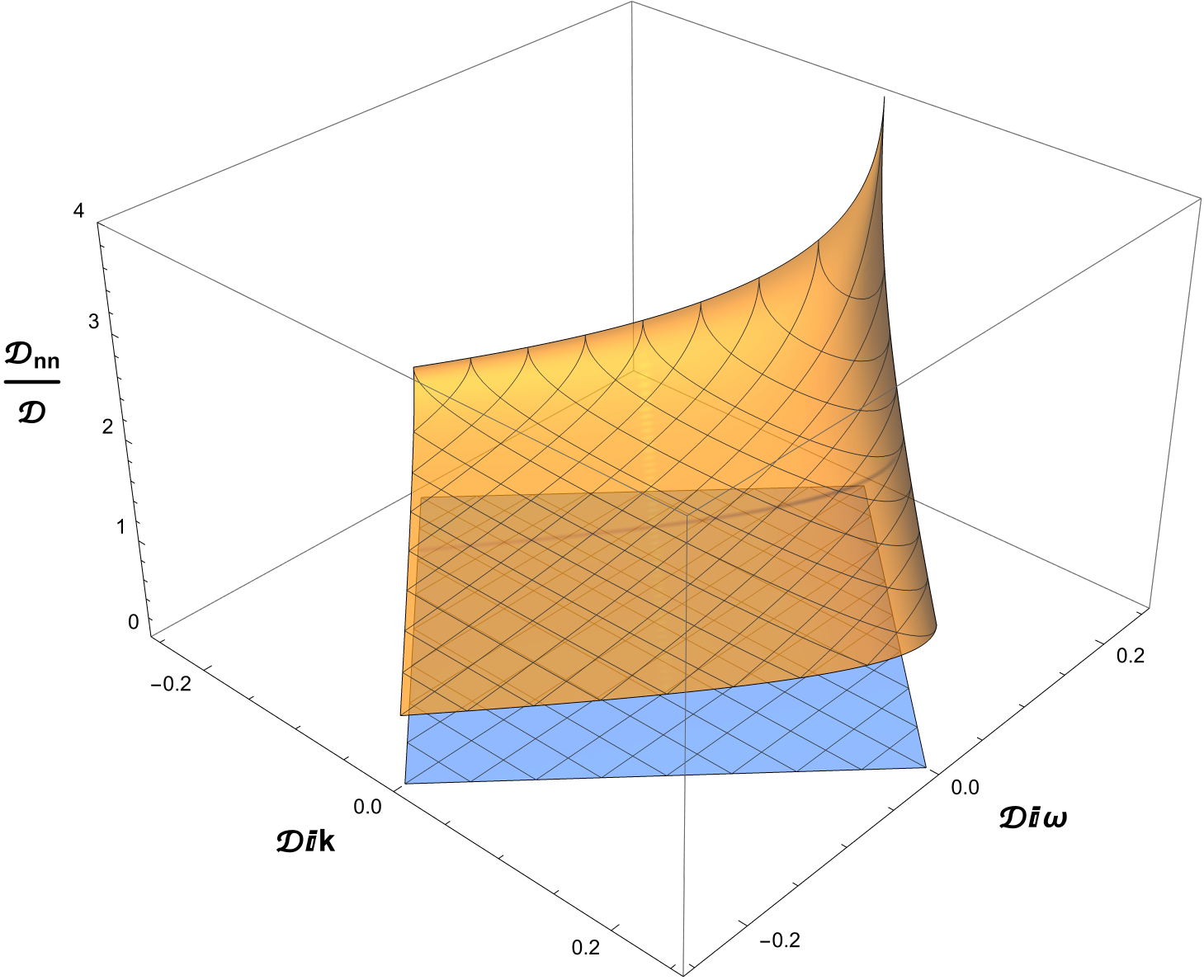}\hspace{0.02\linewidth}\includegraphics[width=0.43\linewidth]{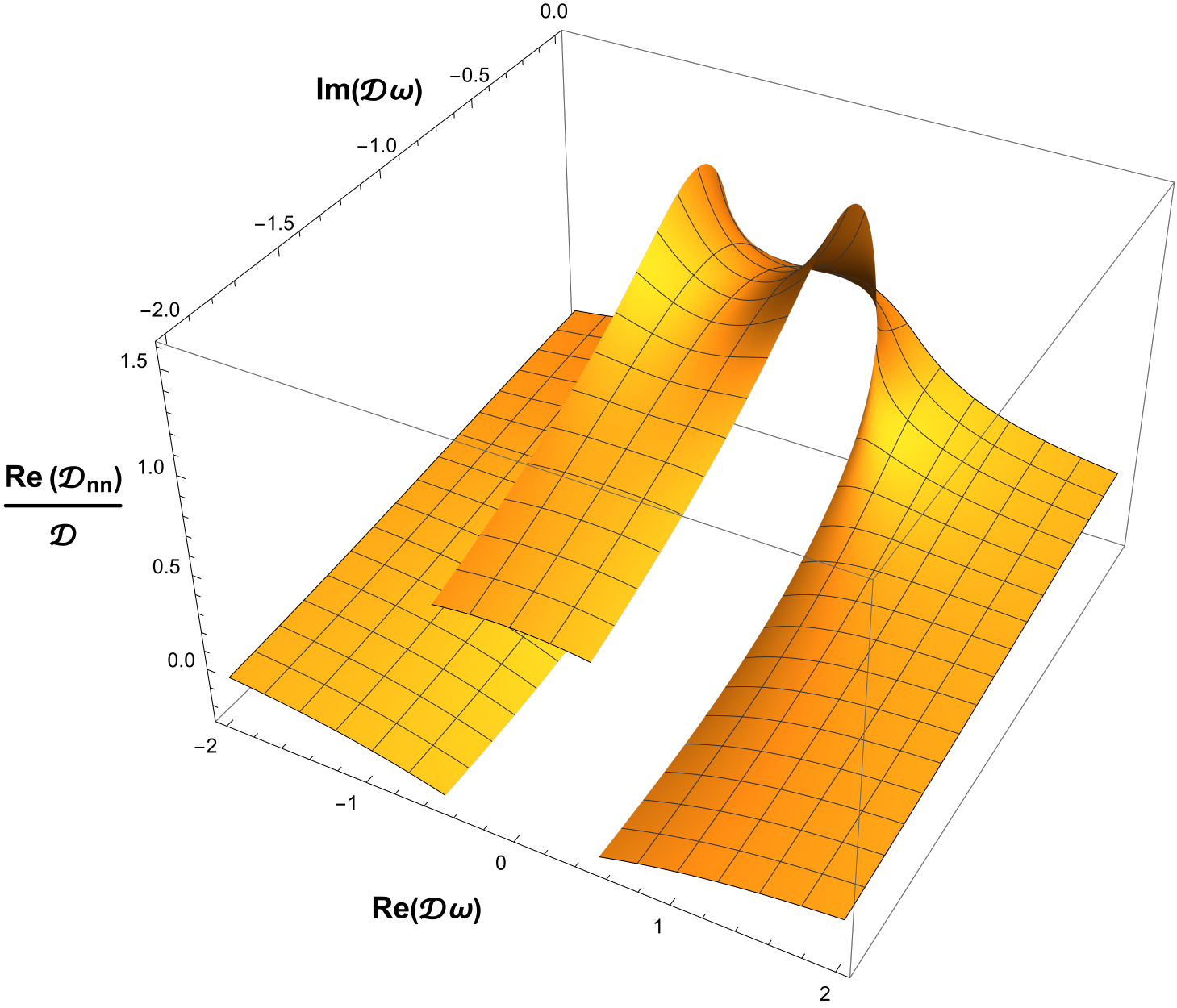}
\caption{Resummed diffusivity $\mathfrak{D}_{nn}$ of Fokker-Planck kinetic theory for massless particles confined in one spatial dimension. \textit{Left panel}: On the causal diamond $|i\omega|+|ik|<1/(4\mathfrak{D})$, the behavior is similar to that found in RTA (see figure \ref{fig:DiamondRTA}). The resummed diffusivity remains finite along the upper horizon (it is also finite on the tip), while its normal derivative diverges. \textit{Right panel}: At fixed $k$ (here, $k\,{=}\,(2\mathfrak{D})^{-1}$), $\mathfrak{D}_{nn}$ exhibits two branch cuts along the nonhydrodynamic spectrum, $\omega=\pm k-i[\frac{1}{4\mathfrak{D}},\infty)$ \cite{GavassinoFokkerPLanck:2026zsz}. On the relaxation plane, the corresponding branch points trace precisely the upper horizon of the causal diamond.}
\label{fig:DiamondFokker}
\end{figure}

\newpage
The two examples above admit a substantial generalization. In a companion paper, we compute the resummed diffusivity exactly for monoenergetic particles undergoing arbitrary rotationally invariant elastic scattering in three spatial dimensions. The result takes the form of an infinite continued fraction, revealing explicitly how the all-order constitutive relation encodes the angular structure of the microscopic scattering process.

\noindent \textbf{\textit{Off-shell transport coefficients ---}} If we compare the gradient expansion \eqref{JIgeneral}, which is a power series in two variables, with the Taylor expansion $\omega(k)=\sum_n a_n k^n$ of the hydrodynamic modes, which is a power series in one variable, we see that, at any fixed order, there are many more coefficients in the former. This is not a redundancy of \eqref{JIgeneral}. Rather, two physical systems with identical hydrodynamic dispersion relations can respond differently to an external hydrodynamic force.
Take, for instance, a conserved particle density $\delta n$ whose source-free evolution is governed by $i\omega=-\mathfrak{D}(ik)^2$. This dispersion relation follows from Fick's law, $\delta J=-ik\mathfrak{D}\delta n$, which contains a single transport coefficient. However, exactly the same dispersion relation follows from $\delta J=-ik\mathfrak{D}_{nn}(i\omega,ik)\delta n$, with $\mathfrak{D}_{nn}$ given by the Fokker-Planck law \eqref{DnnFokkerPLanckwow} \cite{GavassinoFokkerPLanck:2026zsz}, which contains infinitely many coefficients. Thus, Fick and Fokker-Planck are indistinguishable along source-free solutions \cite{GavassinoDiffusionCompatible:2026tvy}, but respond differently to external forces.

This motivates one to call \textit{off-shell transport coefficients} those coefficients that cannot be determined individually from source-free hydrodynamic evolution, but can be disentangled through external forcing \cite{Haehl:2015pja,Bu:2015ame,Bu:2018drd,Diles:2023tau}. For particle diffusion, writing the hydrodynamic mode as $i\omega=-\mathfrak{D}(ik)^2-\mathfrak{B}(ik)^4+\ldots$, where $\mathfrak{D}$ is the ordinary diffusivity and $\mathfrak{B}$ the super-Burnett coefficient, we can reorganize the constitutive relation as
\vspace{-0.2cm}
\begin{equation}
\begin{split}
\mathfrak{D}_{nn}(i\omega,ik) &=\mathfrak{D}+\mathfrak{D}_{nn10}i\omega+\mathfrak{D}_{nn02}(ik)^2+\ldots\\
&=\mathfrak{D}+\mathfrak{D}_{nn10}[i\omega{+}\mathfrak{D}(ik)^2+\ldots]+[\mathfrak{D}_{nn02}{-}\mathfrak{D}_{nn10}\mathfrak{D}](ik)^2+\ldots
\xrightarrow{\text{on-shell}}
\mathfrak{D}+\mathfrak{B}(ik)^2+\ldots \, .
\end{split}
\end{equation}
Hence, while the combination $\mathfrak{B}=\mathfrak{D}_{nn02}-\mathfrak{D}_{nn10}\mathfrak{D}$ is measurable from source-free evolution, the individual coefficients $\mathfrak{D}_{nn10}$ and $\mathfrak{D}_{nn02}$ cannot be disentangled without external forcing. Fokker-Planck theory is particularly striking in this respect: it has infinitely many off-shell transport coefficients, yet they all cancel on-shell, leaving only the diffusivity. For instance, we have that $\mathfrak{D}_{nn02}=2\mathfrak{D}^3$ and $\mathfrak{D}_{nn10}=2\mathfrak{D}^2$, giving $\mathfrak{B}\equiv0$.

\noindent \textbf{\textit{Conclusions ---}} We have argued that the convergence of hydrodynamics is most naturally formulated off-shell, at the level of constitutive relations, rather than individual dispersion relations. The latter may become nonanalytic arbitrarily close to the origin for reasons unrelated to any breakdown of \eqref{JIgeneral}. By contrast, causality and the nonhydrodynamic gap guarantee a universal convergence domain for the resummed transport coefficients in \eqref{decomposco}, which takes the form of a causal diamond on the relaxation plane.

This viewpoint also exposes information that is lost on-shell. The singularities encountered by analytically continuing the all-order gradient expansion encode nonhydrodynamic information, while external forcing gives access to transport coefficients that cannot be disentangled from source-free evolution. Thus, the gradient expansion contains considerably more microscopic information than is visible in hydrodynamic dispersion relations. In this sense, off-shell hydrodynamics knows about the degrees of freedom beyond hydrodynamics.

\noindent \textbf{\textit{Note added. }} After completion of this manuscript, we learned of independent results by A. Serantes on a related convergence problem. Their analysis concerns finite-dimensional systems and allows for oscillatory nonhydrodynamic degrees of freedom, while the present result accommodates infinite-dimensional systems but assumes a purely relaxational nonhydrodynamic spectrum. We thank A. Serantes for discussions and for bringing these results to our attention.

\section*{Acknowledgements}
\vspace{-0.4cm}

This work is supported by a MERAC Foundation prize grant,  an Isaac Newton Trust Grant, and funding from the Cambridge Centre for Theoretical Cosmology. I thank M. Spalinski and A. Serantes for reading the manuscript and providing useful feedback.
\vspace{-0.5cm}

\bibliography{Biblio}

\newpage

\onecolumngrid
\newpage
\begin{center}
\textbf{\large The causal diamond of relativistic hydrodynamics:\\Supplementary Material}\\[.2cm]
  L. Gavassino\\[.1cm]
  {\itshape Department of Applied Mathematics and Theoretical Physics, University of Cambridge, Wilberforce Road, Cambridge CB3 0WA, United Kingdom\\}
\end{center}

\setcounter{equation}{0}
\setcounter{figure}{0}
\setcounter{table}{0}
\setcounter{page}{1}
\renewcommand{\theequation}{S\arabic{equation}}
\renewcommand{\thefigure}{S\arabic{figure}}

\section{Sound-sound collision in a kinetic model}
\vspace{-0.3cm}

Here, we present a kinetic toy model of viscous sound in which we can continuously send the sound speed to zero while keeping $\mathfrak{D}_{IH}(i\omega,ik)$ fixed. As we shall see, also in this case a sound-sound collision takes place at $k_c\tau_g= \mathcal{O}(c_s)$, while the singularities of $\mathfrak{D}_{IH}$ (which are branch points) lie at a fixed, finite distance from the origin in $(i\omega,ik)$ space. Unlike in the Israel-Stewart example, the constitutive relation in this model contains infinitely many non-vanishing higher-order corrections in $k$.

We consider the following system:
\begin{equation}\label{kineticIS}
\begin{split}
&\partial_t \delta \varepsilon+\partial_x \delta\mathcal{F}=0 \, , \\
&\partial_t \delta \mathcal{F}+c_s^2 \partial_x \delta\varepsilon+c_b\partial_x \langle \delta I n_x^2\rangle=0 \, , \\
&(\partial_t +c_0n^x \partial_x)\delta I+\tau_g^{-1} \delta I +c_b n_x^2 \partial_x \delta\mathcal{F}=0 \, . \\
\end{split}
\end{equation}
The first line is the conservation of energy, while the second is the conservation of momentum. The function $\delta I(n^j)\in L^2(S^2)$ is an angular distribution that parameterizes the stress anisotropy, with $\langle \cdot\rangle$ the average over all directions. The third line is a Boltzmann-like equation for $\delta I$. Note that the structure of \eqref{kineticIS} is a generalization of Israel-Stewart, where instead of having a single $\delta\Pi$ that relaxes locally, we have a continuous angular distribution that is transported from point to point. The coupling factors are chosen so that the second law holds, with free-energy density $2(\varepsilon+P)\Delta\mathfrak{F}=c_s^2 (\delta\varepsilon)^2 +(\delta\mathcal{F})^2+\langle (\delta I)^2\rangle$.

Similarly to the Israel-Stewart case, the conserved densities are $\rho_I=(\delta \varepsilon,\delta\mathcal{F})$, and the only non-vanishing diffusivity is $\mathfrak{D}_{\mathcal{F}\mathcal{F}}$. Working in Fourier space, we can solve for $\delta I$, and obtain
\begin{equation}
\mathfrak{D}_{\mathcal{F}\mathcal{F}}=\left\langle \dfrac{n_x^4 c_b^2 \tau_g}{1-i\omega\tau_g +c_0 n^x ik\tau_g} \right\rangle =\sum_{a=0}^\infty \left\langle n_x^4 c_b^2 \tau_g (i\omega\tau_g -c_0 n^x ik\tau_g)^a \right\rangle\, ,
\end{equation}
where the series is guaranteed to converge for $|i\omega|+|c_0 ik| < 1/\tau_g$. 

At the same time, the conservation laws imply $(i\omega)^2 +i\omega (ik)^2 \mathfrak{D}_{\mathcal{F}\mathcal{F}}-c_s^2 (ik)^2=0$. Assuming that $i\omega$ and $k$ are real, this equation can be solved analytically, giving two parametric curves, which are plotted in Fig. \ref{fig:KineticToy}. We see that these curves bifurcate from a small finite momentum $k_c$, meaning that the sound waves cease to be propagating at that wavenumber. To estimate the value of $k_c$ we simply notice that, for $|\omega\tau_g|,|k\tau_g| \ll 1$, we can approximate $\mathfrak{D}_{\mathcal{F}\mathcal{F}}$ with its Navier-Stokes value $c_b^2 \tau_g/5$, which results in a predicted collision at $k_c\tau_g\simeq 10c_s/c_b^2$. Again, in the limit of small $c_s$, this collision happens arbitrarily close to the origin. In contrast, $\mathfrak{D}_{\mathcal{F}\mathcal{F}}$ is independent of $c_s$, and its branch points remain at a fixed, finite distance from the origin.

\begin{figure}[h!]
    \centering
\includegraphics[width=0.42\linewidth]{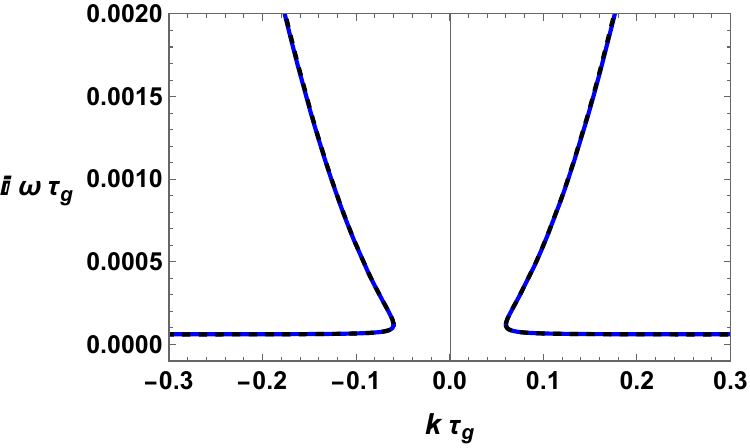}
\caption{Overdamped sound modes of the kinetic model \eqref{kineticIS}, for $c_s=0.002$ and $c_b=c_0=1/\sqrt{3}$. The solid blue curves are the exact result, while the dashed curves are obtained using the Navier-Stokes approximation $\mathfrak{D}_{\mathcal{F}\mathcal{F}}\simeq c_b^2\tau_g/5$. The curves bifurcate at $|k|\tau_g\equiv k_c \tau_g\simeq0.06$, corresponding to the collision of the two sound modes. For $|k|<k_c$, the modes form an underdamped pair and therefore do not lie on the real-$i\omega$ plane.}
    \label{fig:KineticToy}
\end{figure}

\newpage
\section{Regularity of the gradient expansion on the Fourier plane}

We have the following result:
\begin{theorem}\label{theoFourier}
All functions $\mathfrak{D}_{IH}(i\omega,ik)$ are holomorphic in a complex open neighborhood of any point $(\omega_0,k_0)\in\mathbb{R}^2$.
\end{theorem}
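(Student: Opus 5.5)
We show that for real $(\omega_0,k_0)$ the point $i\omega_0$ lies in the resolvent set of $\s_N+ik_0\E_N$ with a quantitative margin, and then perturb in both variables. The key observation is that on the Fourier plane, where $\omega$ and $k$ are real, the operator $\mathbb{A}(k)=\s_N+ik\E_N-i\omega$ splits into a self-adjoint part $\s_N$ and a skew-adjoint part $i(k\E_N-\omega)$. This holds because $\s_N$ and $\E_N$ are self-adjoint: they are compressions of self-adjoint operators by an orthogonal projector. For any $\phi\in\mathcal{D}(\s_N)\subseteq\mathcal{H}_N$ we therefore have
\begin{equation*}
\text{Re}\,(\phi,\mathbb{A}\phi)=(\phi,\s_N\phi)\geq \tau_g^{-1}(\phi,\phi) \, ,
\end{equation*}
where the inequality uses $\text{Spectrum}(\s_N)\subseteq[1/\tau_g,\infty)$ and the spectral theorem. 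The Cauchy--Schwarz inequality then gives $||\mathbb{A}\phi||\geq\tau_g^{-1}||\phi||$, so $\mathbb{A}$ is injective with closed range.

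The same bound holds for the adjoint $\mathbb{A}^\dagger=\s_N-ik\E_N+i\omega$. Since $\E_N$ is bounded, $\mathcal{D}(\mathbb{A}^\dagger)=\mathcal{D}(\s_N)$ and the adjoint is computed termwise. Injectivity of $\mathbb{A}^\dagger$ implies that the range of $\mathbb{A}$ is dense, hence all of $\mathcal{H}_N$. Consequently $\mathbb{A}$ is boundedly invertible for every real $(\omega_0,k_0)$, with $||\mathbb{A}(\omega_0,k_0)^{-1}||\leq\tau_g$. Physically, this is the statement that $\s_N+ik\E_N$ is maximally accretive with numerical range in $\text{Re}\,z\geq1/\tau_g$, while $i\omega_0$ is purely imaginary.

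We then extend to a complex neighbourhood by a Neumann series, mirroring the proof of Theorem \ref{theo1} but expanding around $(\omega_0,k_0)$ instead of the origin. Write $\omega=\omega_0+\delta\omega$ and $k=k_0+\delta k$ with $\delta\omega,\delta k\in\mathbb{C}$. Then
\begin{equation*}
\mathbb{A}(\omega,k)=\mathbb{A}_0\left[1-\mathbb{A}_0^{-1}(i\delta\omega-i\delta k\,\E_N)\right] \, ,
\end{equation*}
with $||\mathbb{A}_0^{-1}(i\delta\omega-i\delta k\,\E_N)||\leq\tau_g(|\delta\omega|+w|\delta k|)$. On the polydisc $|\delta\omega|+w|\delta k|<1/\tau_g$ the series
\begin{equation*}
\mathfrak{D}_{IH}=\sum_a(\PN\E e_I,[\mathbb{A}_0^{-1}(i\delta\omega-i\delta k\,\E_N)]^a\mathbb{A}_0^{-1}\PN\E e_H)
\end{equation*}
converges absolutely. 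Expanding the powers exactly as in Theorem \ref{theo1} yields an absolutely convergent double power series in $(\delta\omega,\delta k)$, and hence holomorphy. As a bonus, the neighbourhood has a uniform size, namely a diamond of radius $1/\tau_g$ around every real point.

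The main obstacle is the surjectivity step. Since $\s_N$ may be unbounded (for instance in kinetic theory), a lower bound on $||\mathbb{A}\phi||$ alone does not give invertibility. One must argue carefully that $\mathbb{A}$ is closed on $\mathcal{D}(\s_N)$ and that its adjoint satisfies the same accretivity estimate. This follows because $\mathbb{A}$ differs from the self-adjoint operator $\s_N$ by a bounded perturbation. With closedness established, the dense-range argument and the closed-range estimate together give a bounded everywhere-defined inverse. Everything else is a direct repetition of the Neumann-series estimates already used for Theorem \ref{theo1}.
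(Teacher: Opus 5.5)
Your proof is correct and follows the paper's argument for the key step. The accretivity bound $\mathfrak{Re}(\phi,\mathbb{A}\phi)\geq\|\phi\|^2/\tau_g$ and Cauchy--Schwarz give closed range, injectivity of the adjoint gives dense range, and together they place $i\omega_0$ in the resolvent set of $\s_N+ik_0\E_N$ with $\|\mathbb{A}_0^{-1}\|\leq\tau_g$. The only difference is the last step: the paper cites Kato's holomorphic-family result, whereas you re-center the Neumann series of Theorem~\ref{theo1} at $(\omega_0,k_0)$. Your version makes joint holomorphy in both variables explicit and gives a uniform neighbourhood $|\delta\omega|+w|\delta k|<1/\tau_g$ around every real point, which reduces to the diamond of Theorem~\ref{theo1} when $\omega_0=k_0=0$.
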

\begin{proof}
The operators $\s_N+ik\E_N$ form a holomorphic family
\cite[\S 7.1.2, Problem 1.2]{Kato_Perturbation_Theory}. Thus, it suffices to show that $i\omega_0$ belongs to the resolvent set of $\s_N+ik_0\E_N$, since in that case $(\s_N+ik\E_N-i\omega)^{-1}$ is bounded-holomorphic in the two complex variables $(i\omega,ik)$ in a neighborhood of $(i\omega_0,ik_0)$, and thus $\mathfrak{D}_{IH}(i\omega,ik)$, being matrix elements of this operator family, are in turn holomorphic there.

Let $\G\equiv\s_N+ik_0\E_N-i\omega_0$. We first show that $\G$ is bounded below. For any $\Psi$ in its domain, we have $\mathfrak{Re}(\Psi,\G\Psi)=(\Psi,\s_N\Psi)\geq ||\Psi||^2/\tau_g$. Hence, by the Cauchy-Schwarz inequality, $||\G\Psi||\geq ||\Psi||/\tau_g$. In particular, $\G$ is injective. Moreover, since $\G$ is closed, this lower bound implies that $\operatorname{Ran}(\G)$ is closed.

It remains to prove that $\G$ is surjective. Since $\G^\dagger=\s_N-ik_0\E_N+i\omega_0$, the same argument gives $\ker(\G^\dagger)=\{0\}$. Therefore, $\overline{\operatorname{ran}(\G)}=[\ker(\G^\dagger)]^\perp=\mathcal{H}_N$. Since $\operatorname{ran}(\G)$ is also closed, it follows that $\operatorname{ran}(\G)=\mathcal{H}_N$. Thus, $\G$ is bijective, and its inverse is bounded, with $||\G^{-1}||\leq\tau_g$. Hence, $i\omega_0$ belongs to the resolvent set of $\s_N+ik_0\E_N$, completing the proof.
\end{proof}

\newpage
\section{Gradient expansion of RTA}

\subsubsection{Outline of the theory}

We consider a gas of massless non-conserved bosons in 3+1 dimensions. Assuming that gradients always point in direction $x$, we can characterize linear fluctuations using perturbations to the distribution function $f(t,x,p^j)$, which counts how many bosons occupy a single-particle state located at $(t,x)$ and with momentum $p^j$. The linearized Boltzmann equation in the Relaxation Time Approximation (RTA) reads
\begin{equation}\label{RTAeom}
(\partial_t +v^x \partial_x)\delta f= \dfrac{1}{\tau_g} [f_{BB}(1+f_{BB})p^\alpha\delta\beta_\alpha-\delta f] \, .
\end{equation}
Here, $v^j=p^j/p^0\in S^2$ is the particle velocity, and $f_{BB}=(e^{p^0/T}-1)^{-1}$ is the equilibrium black-body distribution of the gas in the unperturbed state. The relaxation time $\tau_g$ is assumed to be constant. The vector $\delta\beta_\alpha$ is determined through Landau matching:
\begin{equation}\label{RTAmatching}
\int d^3 p \, p^\nu[f_{BB}(1+f_{BB})p^\alpha\delta\beta_\alpha-\delta f]=0 \, .
\end{equation}
This guarantees that the conservation laws $\partial_t \delta T^{0\nu}+\partial_x\delta T^{x\nu}=0$ are respected, where the energy-momentum tensor is given by
\begin{equation}\label{energymomentumtensor}
\delta T^{\mu \nu}=\int \dfrac{gd^3 p}{(2\pi)^3p^0} p^\mu p^\nu \delta f \qquad\qquad (g=\text{spin degeneracy}) \, .
\end{equation}

We will also need an expression for the quadratic free-energy density associated to $\delta f$:
\begin{equation}\label{deltrafreenergy}
2\Delta\mathfrak{F}=\int \dfrac{gd^3 p}{(2\pi)^3} \dfrac{T(\delta f)^2}{f_{BB}(1+f_{BB})} \, .
\end{equation}

\subsubsection{Shear channel}

In order to isolate the shear channel, and remove some degrees of freedom that are irrelevant for hydrodynamics, we make the following ansatz:
\begin{equation}\label{ansatzshear}
\delta f(p^j)=\sqrt{\dfrac{4}{c_v T^3}} f_{BB}(1+f_{BB})p^0 \dfrac{v_y \Psi(v^x)}{\sqrt{1-v_x^2}} \qquad \qquad (c_v=\text{specific heat}) \, .
\end{equation}
This construction selects excitations that have spin $1$ (relative to rotations around the $x$-axis), and integrates out the energy dependence\footnote{Note that the ansatz \eqref{ansatzshear} is introduced solely for convenience. In general, one can keep $\delta f$ arbitrary, and rescale it as $\delta f=\sqrt{f(1{+}f)}\Psi$. With this construction, one finds that $\mathcal{H}=L^2(\mathbb{R}^3)$, and the natural Onsager inner product is $(\Phi,\Psi)=\int_{\mathbb{R}^3} \Phi^*\Psi d^3 p$ \cite{DudynskiEkielJezewska1985}. Our choice, instead, is convenient because it allows us to work in a small linear subspace of $L^2(\mathbb{R}^3)$, which, as we showed, is isomorphic to $L^2([-1,1])$. The reason why this could be done is that the ansatz \eqref{ansatzshear} selects a linear subspace of $L^2(\mathbb{R}^3)$ that happens to be invariant under the action of both $\s$ and $\E$ (viewed as operators on the larger space). In a more complicated kinetic theory, this kind of simplification may not be available.}. Our choice of factors guarantees that the natural Onsager inner product, defined via \eqref{deltrafreenergy}, is just $(\Phi,\Psi)=\int_{-1}^1 \Phi^* \Psi dv_x$, and thus we can set $\mathcal{H}=L^2([-1,1])$.

Plugging \eqref{ansatzshear} into \eqref{RTAeom} and \eqref{RTAmatching}, and introducing the normalized vector $e_y=\sqrt{1{-}v_x^2}/||\sqrt{1{-}v_x^2}||$ we obtain
\begin{equation}
(\partial_t+v_x \partial_x)\Psi= \dfrac{1}{\tau_g} [e_y(e_y,\Psi)-\Psi] \, .
\end{equation}
From here, we immediately read off the transport operator $\E=v_x$, the relaxation operator $\s=\tau_g^{-1}[1-e_y(e_y,\cdot)]$, and the conserved space $\mathcal{H}_C=\ker(\s)=\text{span}\{e_y\}$. Thus, there is a single relevant conservation law in this channel, with density $\rho_y=(e_y,\Psi)$ and flux $J_y=(e_y,v_x\Psi)$, whose local equilibrium part is $J_y^{l.e.}=\chi_{yy} \rho_y=(e_y,v_xe_y)\rho_y=0$. The corresponding conserved charge is the momentum in direction $y$, since one finds from \eqref{energymomentumtensor} that $\delta T^{0y}=\sqrt{c_vT/3}\,(e_y,\Psi)$ and $\delta T^{xy}=\sqrt{c_vT/3}\,(e_y,v_x\Psi)$. Hence, according to the discussion in the main text, the quantity
\begin{equation}\label{diffusivitySheatRTA}
\mathfrak{D}_{yy}(i\omega,ik)= (\PN v_x e_y,[\s_N+ik\E_N-i\omega]^{-1}\PN v_x e_y)
\end{equation}
is the resummed momentum diffusivity of shear waves. In the following, we shall evaluate it explicitly.

By definition, $\PN=1-e_y(e_y,\cdot)$, so that $\PN v_x e_y=v_x e_y-e_y(e_y,v_xe_y)=v_x e_y$. Moreover, if $\Psi \in \mathcal{H}_N=e_y^\perp$, the $\s\Psi=\tau_g^{-1}$, so that $\s_N=1/\tau_g$. Therefore, equation \eqref{diffusivitySheatRTA} reduces to
\begin{equation}\label{diffusivitySheatRTA}
\mathfrak{D}_{yy}(i\omega,ik)= \dfrac{\tau_g}{1-i\omega\tau_g} (v_x e_y,[1+r \E_N]^{-1}v_x e_y) \,, \qquad\qquad r=\dfrac{ik\tau_g}{1-i\omega\tau_g} \, .
\end{equation}
Now, we recall that the compression $\E_N$ acts as follows: $\E_N \Psi =v_x \Psi-e_y (e_y,v_x\Psi)$ (for $\Psi \in \mathcal{H}_N$). Therefore, to determine $[1+r\E_N]^{-1}$, we need to invert the relation $\Psi=(1+r\E_N)\Phi=(1+rv_x)\Phi- re_y(e_y,v_x\Phi)$, under the assumption that $\Psi,\Phi \perp e_y$. With a bit of algebra, one obtains $\Phi=(1+r\E_N)^{-1}\Psi$, with
\begin{equation}
(1+r\E_N)^{-1}\Psi=\dfrac{\Psi}{1{+}rv_x} -\dfrac{\dfrac{e_y}{1{+}rv_x} \left(e_y,\dfrac{\Psi}{1{+}rv_x} \right)}{\left(e_y,\dfrac{e_y}{1{+}rv_x}\right)} \, .
\end{equation}
Plugging this into \eqref{diffusivitySheatRTA}, and evaluating the inner products explicitly, we obtain
\begin{equation}
\mathfrak{D}_{yy}(i\omega,ik)=  \dfrac{1}{r^2} \left[1-\frac{2/3 }{ \frac{1}{r^2}+\left(1-\frac{1}{r^2}\right) \frac{\text{arctanh}\left(r\right)}{r}} \right] \dfrac{\tau_g}{1-i\omega\tau_g} \, ,
\end{equation}
where we recall that $r=ik\tau_g/(1-i\omega\tau_g)$.
As a consistency check, one can verify explicitly that the conservation law $i\omega=-\mathfrak{D}_{yy} (ik)^2$ gives the usual dispersion relation of shear waves in RTA \cite{RomatschkeCutsandPoles:2015gic,GavassinoNonNewtonianRadiation:2024jej}.

\subsubsection{Sound channel}

To isolate the sound channel (i.e. spin $0$ excitations) we make the ansatz
\begin{equation}\label{ansatzsound}
\delta f(p^j)=\sqrt{\dfrac{2}{c_vT^3}} f_{BB}(1+f_{BB})p^0 \Psi(v^x) \, .
\end{equation}
With this choice of normalization, the natural Onsager inner product again reads $(\Phi,\Psi)=\int_{-1}^1 \Phi^* \Psi dv_x$, and so again $\mathcal{H}=L^2([-1,1])$. Plugging \eqref{ansatzsound} into \eqref{RTAeom} and \eqref{RTAmatching}, and defining $e_t=1/||1||$ and $e_x=v_x/||v_x||$, we obtain
\begin{equation}
(\partial_t +v_x \partial_x)\Psi= \dfrac{1}{\tau_g}[e_t(e_t,\Psi)+e_x(e_x,\Psi)-\Psi] \, ,
\end{equation}
from which we read off $\E=v_x$, $\s=\tau_g^{-1}[1-e_t(e_t,\cdot)-e_x(e_x,\cdot)]$, and $\mathcal{H}_C=\ker(\s)=\text{span}\{e_t,e_x\}$. Therefore, there are two conserved densities, $\rho_t=(e_t,\Psi)$ and $\rho_x=(e_x,\Psi)$, with corresponding fluxes $J_t=(e_t,v_x\Psi)$ and $J_x=(e_x,v_x\Psi)$. Using \eqref{energymomentumtensor}, we find that $\delta T^{tt}=\sqrt{c_v T} \rho_t$ and $\delta T^{tx}=\sqrt{c_v T} J_t$, so that the first conservation law is just the conservation of energy, while $\delta T^{tx}=\sqrt{c_v T/3} \rho_x$ and $\delta T^{xx}=\sqrt{c_v T/3} J_x$, so that the second conservation law is the conservation of momentum in direction $x$. Using the fact that $v_xe_t=\frac{1}{\sqrt{3}} e_x$, we immediately find that
\begin{equation}
\begin{bmatrix}
J_t \\
J_x \\
\end{bmatrix}
=
\underbrace{\begin{bmatrix}
0 & 1/\sqrt{3} \\
1/\sqrt{3} & 0 \\
\end{bmatrix}
\begin{bmatrix}
\rho_t \\
\rho_x \\
\end{bmatrix}}_{\text{Local equilibrium}}-\underbrace{ik
\begin{bmatrix}
0 & 0 \\
0 & \mathfrak{D}_{xx} \\
\end{bmatrix}
\begin{bmatrix}
\rho_t \\
\rho_x \\
\end{bmatrix}}_{\text{Diffusive part}} \, .
\end{equation}
The local equilibrium term describes ideal fluid dynamics, with speed of sound $1/\sqrt{3}$. The diffusive part describes viscosity. Our task is to compute the matrix element $\mathfrak{D}_{xx}(i\omega,ik)=(\PN v_x e_x,[\s_N+ik\E_N-i\omega]^{-1}\PN v_x e_x)$. Using the fact that $\PN v_x e_x=v_x e_x{-}\tfrac{1}{\sqrt{3}}e_t=\sqrt{3} (v_x^2-\tfrac{1}{3})e_t$, and $\s_N=1/\tau_g$, we find that
\begin{equation}
\mathfrak{D}_{xx}(i\omega,ik)=\dfrac{3\tau_g}{1-i\omega\tau_g}\left( (v_x^2-\tfrac{1}{3})e_t,[1+r\E_N]^{-1} (v_x^2-\tfrac{1}{3})e_t\right) \,, \qquad\qquad r=\dfrac{ik\tau_g}{1-i\omega\tau_g} \, .
\end{equation}
Additionally, inverting the relation $\Psi=(1+r\E_N)\Phi$, with $\Psi,\Phi\perp e_t,e_x$, we obtain
\begin{equation}
(1+r\E_N)^{-1}\Psi=\dfrac{\Psi}{1{+}rv_x} -\dfrac{\dfrac{e_x}{1{+}rv_x} \left(e_t,\dfrac{\Psi}{1{+}rv_x} \right)}{\left(e_t, \dfrac{e_x}{1{+}rv_x}\right)} \, .
\end{equation}
Hence, we finally obtain
\begin{equation}
\mathfrak{D}_{xx}(i\omega,ik)=\left[ \dfrac{1}{r^2}+\dfrac{ \text{arctanh}(r)/3}{ r-\text{arctanh}(r)}\right] \dfrac{\tau_g}{1-i\omega\tau_g} \, .
\end{equation}
Also in this case, if one solves the conservation laws (with no source) using this resummed diffusivity, one recovers the dispersion relation of sound waves in RTA \cite{GavassinoMovingInterfaces:2026fkj}.

\subsubsection{Behavior of the resummed diffusivities in the Fourier plane}


\begin{figure}[h!]
    \centering
\includegraphics[width=0.48\linewidth]{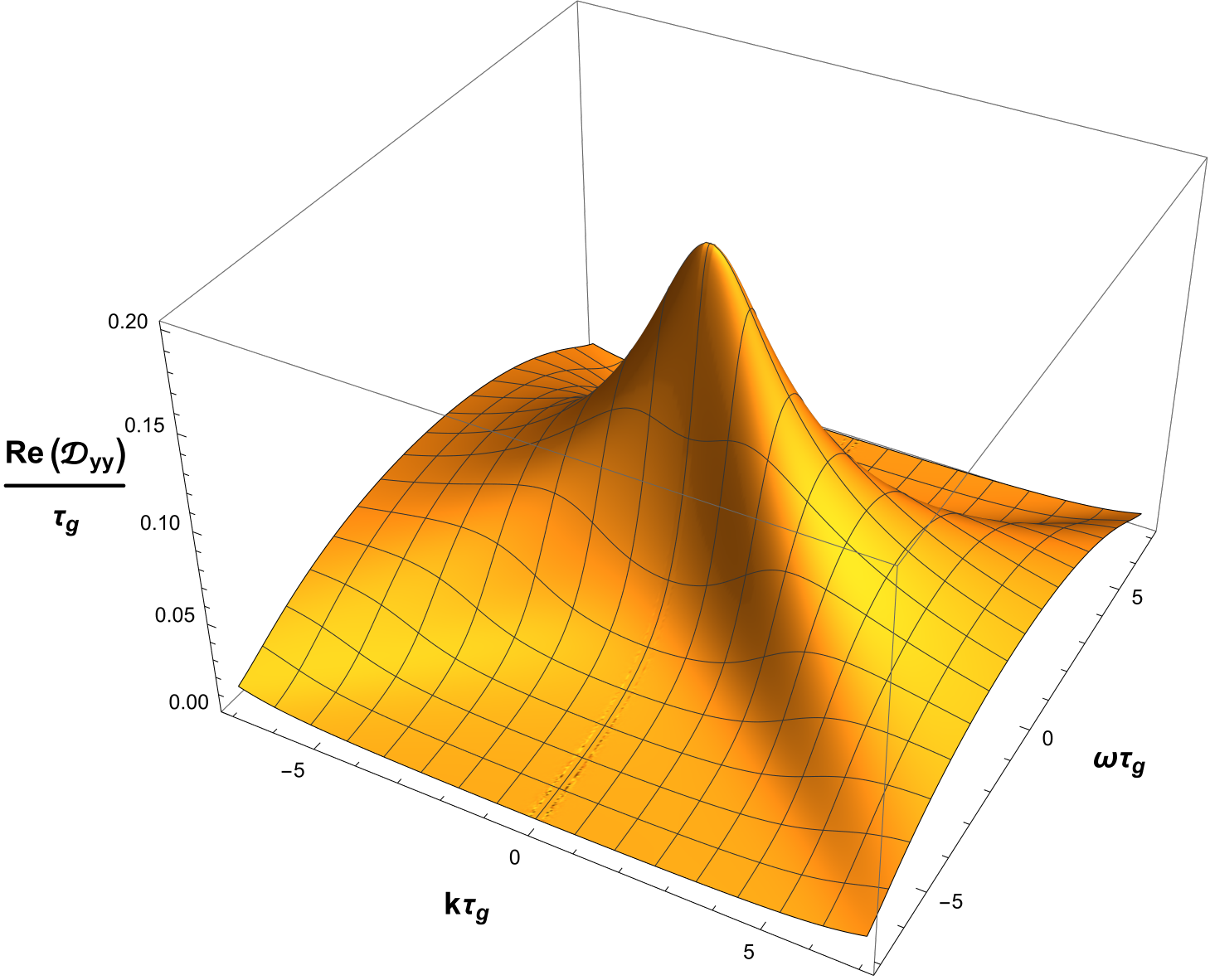}\hspace{0.02\linewidth}\includegraphics[width=0.48\linewidth]{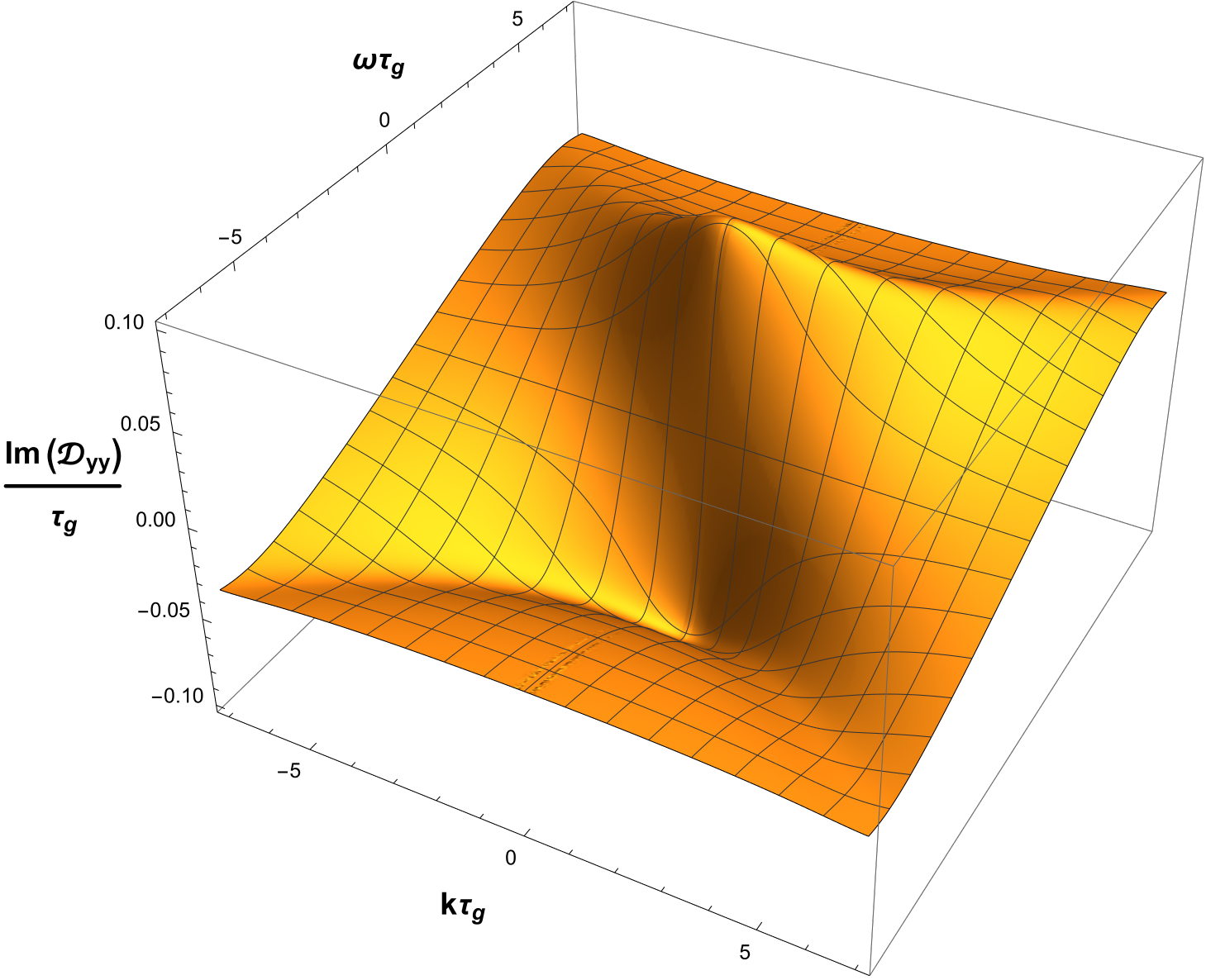}
\includegraphics[width=0.48\linewidth]{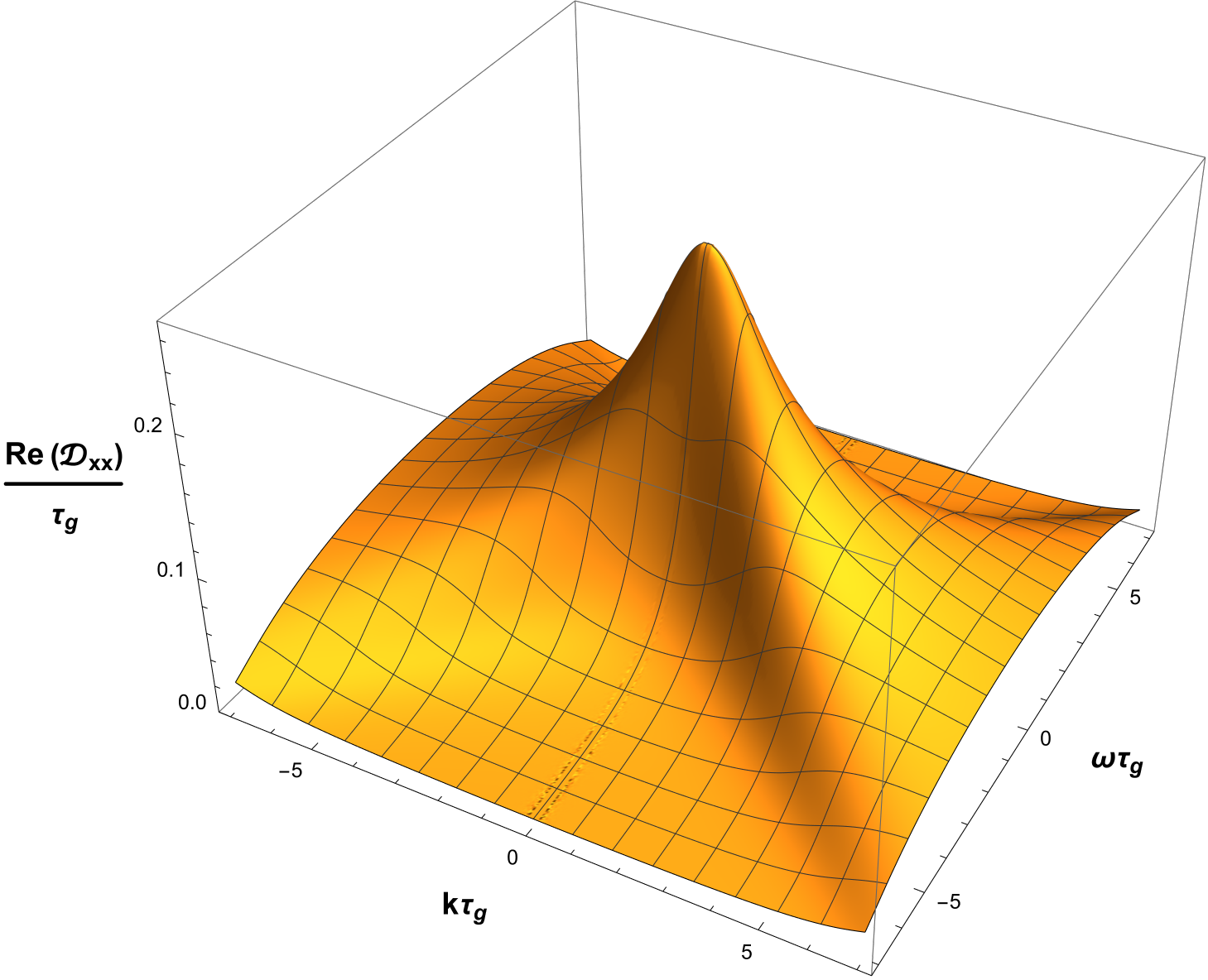}\hspace{0.02\linewidth}\includegraphics[width=0.48\linewidth]{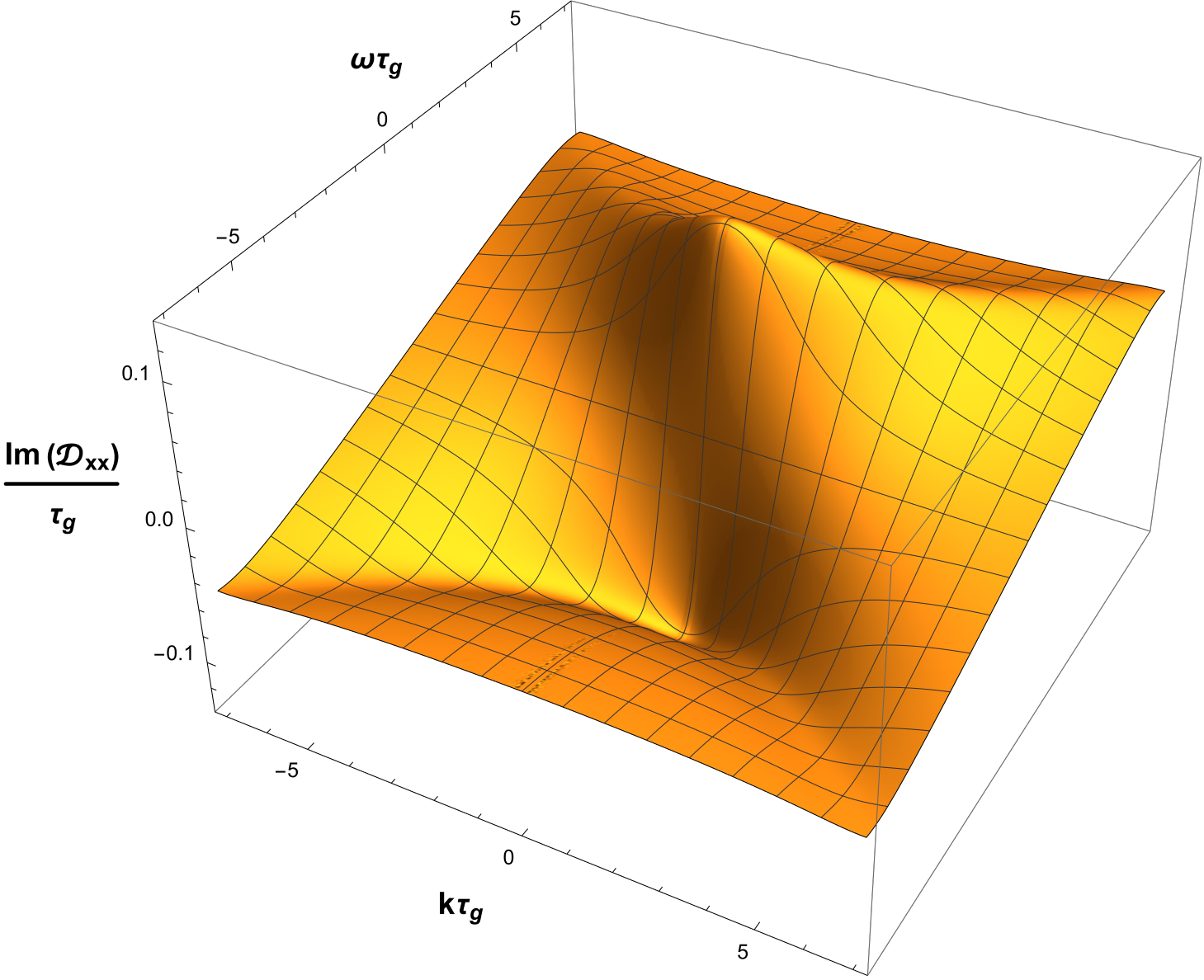}
\caption{Real (left) and imaginary (right) parts of the resummed diffusivities $\mathfrak{D}_{yy}$ (top) and $\mathfrak{D}_{xx}$ (bottom) in RTA, for real $\omega$ and $k$. As either frequency or wave number increases, the nonequilibrium flux response is suppressed, since particles average over increasingly rapid oscillations of the perturbation during a relaxation time. Increasing $\omega$ also produces a pronounced phase lag between flux and density, reflecting the finite response time. The dependence on $k$ is milder, both in magnitude and phase, because particles arriving from different directions experience and respond differently to the spatial gradient. The two diffusivities have very similar profiles, reflecting the same underlying physics of ballistic propagation over a relaxation time, but are not proportional: $\mathfrak{D}_{yy}/\mathfrak{D}_{xx}$ varies approximately between $0.7$ and $0.8$, and equals $3/4$ exactly along the line $k=0$.}
\label{fig:RTAFourier}
\end{figure}

\newpage
\section{Gradient expansion of Fokker-Planck kinetic theory}

\subsubsection{Outline of the theory}

We consider a non-degenerate ensemble of massless spinless particles in one spatial dimension, which do not interact with each other but scatter off a background medium, held at constant inverse temperature $\beta=1/T$. In the limit where the collisions with the medium are infinitely frequent but infinitely soft (Brownian motion in momentum space), the particle distribution function evolves according to the following linearized equation \cite{Debbasch,DunkelHanggi}:
\begin{equation}\label{BoltzmannFokkerPlanck}
(\partial_t +v\partial_x)\delta f =\dfrac{1}{\beta^2 \mathfrak{D}} \partial_p (\partial_p \delta f + \beta v \delta f) \, .
\end{equation}
Here, $\mathfrak{D}$ is the ordinary (non-resummed) particle diffusivity, and $v=\text{sign}(p)$ is the velocity of the particle with momentum $p$. The quadratic free-energy density reads
\begin{equation}
2\Delta\mathfrak{F}=\int_\mathbb{R} \dfrac{dp}{2\pi \beta} e^{\beta|p|} (\delta f)^2 \, .
\end{equation}
Hence, if we define the rescaled variable $\Psi{=}e^{\beta |p|/2}\delta f/\sqrt{2\pi \beta}$, the Onsager inner product reduces to $(\Phi,\Psi){=}\int_\mathbb{R} \Phi^* \Psi dp$, so that $\mathcal{H}=L^2(\mathbb{R})$. In these new variables, equation \eqref{BoltzmannFokkerPlanck} reduces to
\begin{equation}
(\partial_t +v\partial_x)\Psi=\dfrac{1}{\beta^2 \mathfrak{D}} \left[\partial_p^2 \Psi +\beta\delta(p) \Psi -\dfrac{\beta^2}{4}\Psi \right],
\end{equation}
from which we read off the relevant operators: $\E=v$ and $\s=\frac{1}{\beta^2 \mathfrak{D}} \left[-\partial_p^2 -\beta\delta(p) +\frac{\beta^2}{4} \right]$. The collision kernel $\s$ is a one-dimensional Schr\"{o}dinger operator \cite{GavassinoFokkerPLanck:2026zsz}, whose kernel is spanned by the unit vector
\begin{equation}
e_n =\dfrac{e^{-\beta|p|/2}}{||e^{-\beta|p|/2}||} \, .
\end{equation}
Hence, there is a single conservation law, with density $\rho_n=(e_n,\Psi)$ and corresponding flux $J_n=(e_n,v\Psi)$, whose local equilibrium part $J_n^\text{l.e.}=(e_n,ve_n)$ vanishes. The corresponding conserved charge is the particle number, since one finds that $\delta n=\int \frac{dp}{2\pi} \delta f=\frac{1}{\sqrt{\pi}}(e_n,\Psi)$ and $\delta J=\int \frac{dp}{2\pi} v\delta f=\frac{1}{\sqrt{\pi}}(e_n,v\Psi)$. Thus, the quantity
\begin{equation}
\mathfrak{D}_{nn}(i\omega,ik)= (\PN v e_n,[\s_N+ik\E_N-i\omega]^{-1}\PN v e_n) 
\end{equation}
is the resummed diffusivity entering the balance law of the particle number.

\subsubsection{Evaluation of the resummed diffusivity}

In the Fokker-Planck case, the inverse of the compressed operator $\s_N+ik\E_N-i\omega$ is rather difficult to evaluate explicitly. Hence, we will follow a different route.

Consider the vector $\Phi=[\s+ik\E-i\omega]^{-1}e_n$. Since $\mathcal{H}=\text{span}(e_n)\oplus e_n^\perp$, we can write $\Phi=Me_n +\Phi_\perp$, for some $M\in \mathbb{C}$ and $\Phi_\perp \in e_n^\perp$. Hence, taking the inner product with $e_n$, we obtain
\begin{equation}\label{Matrixonewow}
M=(e_n,\Phi)=(e_n,[\s+ik\E-i\omega]^{-1}e_n) \, .
\end{equation}
On the other hand, we have $[\s+ik\E-i\omega](Me_n+\Phi_\perp)=e_n$. Applying $\PN$ on both sides, we obtain  the identity $\Phi_\perp=-ikM[\s_N+ik\E_N-i\omega]^{-1}\PN v e_n$, which, when plugged into the unprojected equation, gives the following result:
\begin{equation}\label{Dnnnoncopmpessedformula}
\mathfrak{D}_{nn}=-\dfrac{i\omega+M^{-1}}{(ik)^2} \, . 
\end{equation}
Thus, we only need to evaluate the matrix element \eqref{Matrixonewow}, which involves \textit{uncompressed} operators.

Writing $[\s+ik\E-i\omega]\Phi=e_n$ explicitly, we have the following inhomogeneous Schr\"{o}dinger problem:
\begin{equation}\label{schrdoinghiam}
\left[-\partial^2_p-\beta\delta(p)+\beta^2\left(\dfrac{1}{4}+ ik\mathfrak{D}v- i\omega\mathfrak{D} \right)\right]\Phi =\mathfrak{D}\beta^2 \sqrt{\dfrac{\beta}{2}} e^{-\beta|p|/2} \, .  
\end{equation}
We must find solutions that decay at infinity.
Following the standard textbook procedure, we decompose \eqref{schrdoinghiam} as
\begin{equation}\label{schroddecomposedwell}
\begin{split}
&\left[-
\partial_p^2+\beta^2 r_-^2\right]\Phi=\beta^2 z \, e^{\beta p/2} \, , \qquad \qquad \text{for }p<0 \, ,\\
&\left[-\partial_p^2+\beta^2 r_+^2\right]\Phi=\beta^2 z \, e^{-\beta p/2}\, , \qquad \qquad \text{for }p>0 \, , \\
& \Phi(0^-)=\Phi(0^+) \, , \qquad\qquad \Phi'(0^+)-\Phi'(0^-)+\beta\Phi(0)=0 \, ,
\end{split}
\end{equation}
where we have introduced, for convenience, the quantities 
\begin{equation}
r_\pm=
\sqrt{\frac{1}{4}\pm ik\mathfrak{D} -i\omega \mathfrak{D}} \, , \qquad \qquad 
z=\mathfrak{D}\sqrt{\dfrac{\beta}{2}} \, .
\end{equation}
Plugging the ansatz solution\footnote{The signs of the exponents in \eqref{ansatz} were chosen to guarantee decay at infinity of $\Phi$ for $i\omega$ and $ik$ within the causal diamond.}\vspace{-0.2cm}
\begin{equation}\label{ansatz}
\Phi=\begin{cases}
A_- e^{\beta p/2} +B_- e^{r_- \beta p} & \text{for }p<0, \\
A_+ e^{-\beta p/2}+B_+ e^{-r_+ \beta p} & \text{for }p>0, \\
\end{cases}  
\end{equation}
into \eqref{schroddecomposedwell}, we obtain
\vspace{-0.3cm}
\begin{equation}
A_\pm = \dfrac{4 z}{4 r_\pm^2 -1} \, , \qquad B_\pm=\pm\frac{8 z \left(r_+^2-r_-^2\right)}{(r_++r_--1)\left(4 r_\pm^2-1\right) (2 r_\mp+1)} \, .
\end{equation}
Taking the inner product $M=(e_n,\Phi)$, we obtain
\vspace{-0.2cm}
\begin{equation}
M= \frac{A_-+A_++\frac{2 B_-}{2 r_-+1}+\frac{2 B_+}{2 r_++1}}{\sqrt{2\beta}} \, ,
\end{equation}
which plugged into \eqref{Dnnnoncopmpessedformula} finally gives us the resummed diffusivity:
\begin{equation}
\mathfrak{D}_{nn}(i\omega,ik)= \frac{ik \left[\sqrt{1{-}4 \mathfrak{D} (i\omega{+}ik)}-\sqrt{1{-}4 \mathfrak{D} (i\omega{-}ik)}\right]-i\omega \left[\sqrt{1{-}4 \mathfrak{D} (i\omega{+}ik)}+\sqrt{1{-}4 \mathfrak{D} (i\omega{-}ik)}-2\right]}{(i\omega)^2 \left[\sqrt{1{-}4 \mathfrak{D} (i\omega{+}ik)}+\sqrt{1{-}4 \mathfrak{D} ( i\omega{-}ik)}+2\right]-ik i\omega \left[\sqrt{1{-}4 \mathfrak{D} (i\omega{+}ik)}-\sqrt{1{-}4 \mathfrak{D} ( i\omega{-}ik)}\right]-4 (ik)^2} \, .
\end{equation}
Plugging this expression into the conservation law $i\omega+\mathfrak{D}_{nn}(ik)^2=0$, we recover Fick's law of diffusion $i\omega+\mathfrak{D}(ik)^2=0$, in agreement with \cite{GavassinoFokkerPLanck:2026zsz,GavassinoDiffusionCompatible:2026tvy}.
The profile of $\mathfrak{D}_{nn}$ on the Fourier plane (i.e. for real $\omega$ and $k$) is shown in figure \ref{fig:FokkerPlanckFourier}.

\begin{figure}[h!]
    \centering
\includegraphics[width=0.48\linewidth]{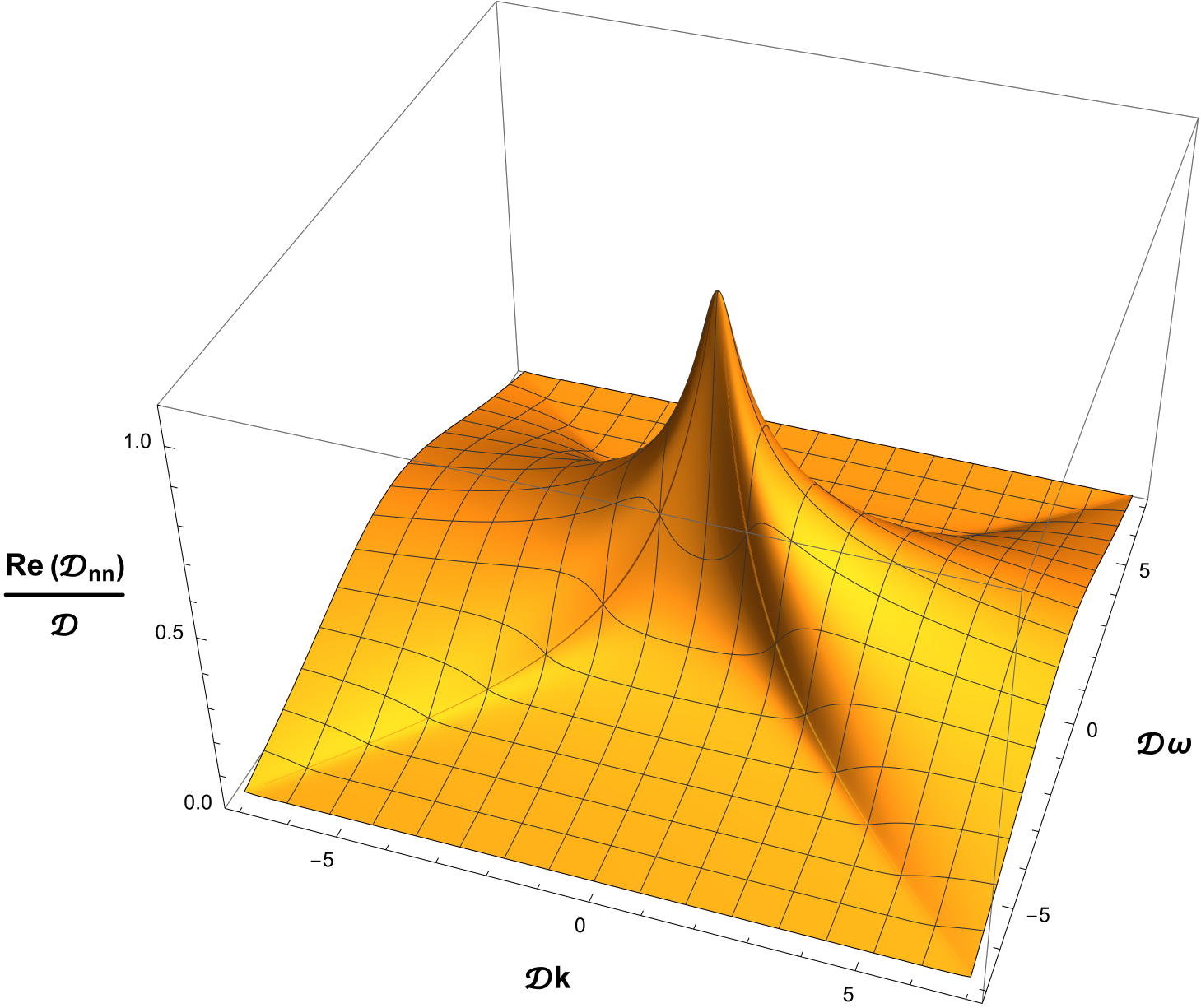}\hspace{0.02\linewidth}\includegraphics[width=0.48\linewidth]{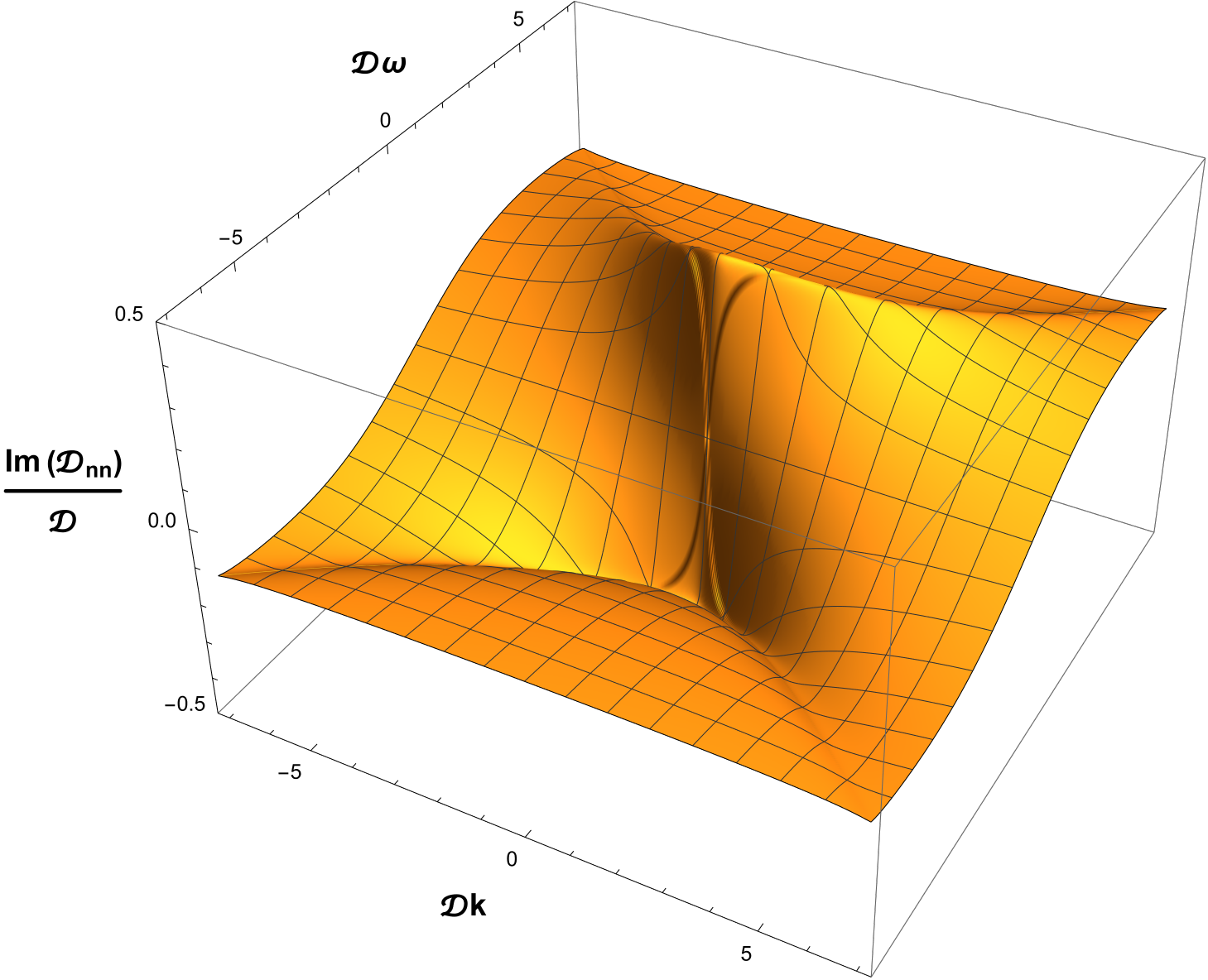}
\caption{Real (left) and imaginary (right) parts of the resummed diffusivity $\mathfrak{D}_{nn}$ in Fokker-Planck kinetic theory in one spatial dimension, for real $\omega$ and $k$. The trends are very similar to those found in RTA (see figure \ref{fig:RTAFourier}). The apparent loss of smoothness along the lines $\omega=\pm k$ is a numerical artifact, caused by evaluating an indeterminate form ``$0/0$''. Theorem \ref{theoFourier} guarantees that $\mathfrak{D}_{nn}$ is regular everywhere on the Fourier plane.}
\label{fig:FokkerPlanckFourier}
\end{figure}

\label{lastpage}
\end{document}